\begin{document}

\newtheorem{pro}{Proposition}
\newtheorem{prop}{Property}
\newtheorem{theo}{Theorem}
\newtheorem{lem}{Lemma}
\newtheorem{rem}{Remark}

 \newcommand{\be}{\begin{equation}}
 \newcommand{\ee}{\end{equation}}
 \newcommand{\bee}{\begin{eqnarray}}
 \newcommand{\eee}{\end{eqnarray}}

 \newcommand{\nnb}{\nonumber}
 \newcommand{\qP}{ {\mathbf P}}
 \newcommand{\cf}{ {$ \,_1F_1 $}}
 \newcommand{\gf}{ {$ \,_2F_1 $}}
 \newcommand{\hf}{ {$\frac{1}{2}$}}

 \newcommand{\Ea}{ {$\frac{\Omega_1  t^2}{4}$}}
 \newcommand{\Eb}{ {$\frac{\Omega_2  t^2}{4}$}}
 \newcommand{\Ec}{ {$\frac{\Omega_3  t^2}{4}$}}
 \newcommand{\Ek}{ {$\frac{\Omega_k  t^2}{4}$}}

 \newcommand{\ppa}{ {$\sqrt{\frac{\pi \Omega_1}{4}}$}}
 \newcommand{\ppb}{ {$\sqrt{\frac{\pi \Omega_2}{4}}$}}
 \newcommand{\ppc}{ {$\sqrt{\frac{\pi \Omega_3}{4}}$}}
 \newcommand{\qqk}{ {$\sqrt{\frac{\pi \Omega_k}{4}}$}}

\newcommand{\qq}{{\mathbf q}}
\newcommand{\qT}{{\mathbf T}}

 \newcommand{\qsum}{ {$\Omega_1 + \Omega_2 +\Omega_3$}}
 \newcommand{\qii}{ {$\int_{-\infty}^{\infty}$}}
 \newcommand{\qa}{ {\mathbf a}}
 \newcommand{\qb}{ {\mathbf b}}
 \newcommand{\qc}{ {\mathbf c}}
 \newcommand{\qe}{ {\mathbf e}}

 \newcommand{\qg}{ {\mathbf g}}
 \newcommand{\qh}{ {\mathbf h}}
 \newcommand{\qm}{ {\mathbf m}}
 \newcommand{\qn}{ {\mathbf n}}

 \newcommand{\qp}{ {\mathbf p}}
 \newcommand{\qr}{ {\mathbf r}}
 \newcommand{\qs}{ {\mathbf s}}
 \newcommand{\qt}{ {\mathbf t}}
 \newcommand{\qu}{ {\mathbf u}}

 \newcommand{\qv}{ {\mathbf v}}
 \newcommand{\qw}{ {\mathbf w}}
 \newcommand{\qx}{ {\mathbf x}}
 \newcommand{\qy}{ {\mathbf y}}
 \newcommand{\qz}{ {\mathbf z}}

 \newcommand{\qA}{ {\mathbf A}}
 \newcommand{\qB}{ {\mathbf B}}
 \newcommand{\qC}{ {\mathbf C}}
 \newcommand{\qD}{ {\mathbf D}}
 \newcommand{\qE}{ {\mathbf E}}
 \newcommand{\qF}{ {\mathbf F}}
 \newcommand{\qG}{ {\mathbf G}}
 \newcommand{\qH}{ {\mathbf H}}
 \newcommand{\qI}{ {\mathbf I}}
  \newcommand{\qJ}{ {\mathbf J}}

 \newcommand{\qL}{ {\mathbf L}}
 \newcommand{\qM}{ {\mathbf M}}
 \newcommand{\qQ}{ {\mathbf Q}}

 \newcommand{\qR}{ {\mathbf R}}
 \newcommand{\qS}{ {\mathbf S}}
 \newcommand{\qX}{ {\mathbf X}}
 \newcommand{\qY}{ {\mathbf Y}}
 \newcommand{\qZ}{ {\mathbf Z}}
 \newcommand{\qU}{ {\mathbf U}}
 \newcommand{\qV}{ {\mathbf V}}
 \newcommand{\qW}{ {\mathbf W}}
 \newcommand{\qK}{ {\mathbf K}}
 \newcommand{\zR}{ {\cal R}}
 \newcommand{\qLambda}{ {\mathbf \Lambda}}
 \newcommand{\qrho}{ {$\boldsymbol \rho$}}

\newcommand{\hs}[2]{ {${\hat{#1}}{_{#2}}$}}
\newcommand{\bs}[2]{ {${\mathbf {#1}}{_{#2}}$}}
\newcommand{\hbs}[2]{ {${\hat{\mathbf {#1}}}{_{#2}}$}}
\newcommand{\hb}[1]{ {${\hat{\mathbf #1}}$}}
\renewcommand{\thesection}{\Roman{section}}

\title{Transceiver Design For SC-FDE Based MIMO Relay
Systems
\author{Peiran Wu, Robert Schober, and Vijay Bhargava}
\thanks{The authors are with the Department of Electrical and Computer
Engineering, University of British
Columbia, Vancouver, BC Canada, V6T,
1Z4, email: \{peiranw, rschober,
vijayb\}@ece.ubc.ca. This paper will be
presented in part at the IEEE Global
Communications Conference (Globecom),
Anaheim, December 2012.} } \maketitle

\begin{abstract} In this paper, we propose a joint transceiver design for single-carrier
frequency-domain equalization (SC-FDE)
based multiple-input multiple-output
(MIMO) relay systems. To this end, we
first derive the optimal minimum
mean-squared error linear and
decision-feedback frequency-domain
equalization filters at the destination
along with the corresponding error
covariance matrices at the output of
the equalizer. Subsequently, we
formulate the source and relay
precoding matrix design problem as the
minimization of a family of
Schur-convex and Schur-concave
functions of the mean-squared errors at
the output of the equalizer under
separate power constraints for the
source and the relay. By exploiting
properties of the error covariance
matrix and results from majorization
theory, we derive the optimal
structures of the source and relay
precoding matrices, which allows us to
transform the matrix optimization
problem into a scalar power
optimization problem. Adopting a high
signal-to-noise ratio approximation for
the objective function, we obtain the
global optimal solution for the power
allocation variables. Simulation
results illustrate the excellent
performance of the proposed system and
its superiority compared to
conventional orthogonal
frequency-division multiplexing based
MIMO relay systems.
\end{abstract}
\vspace*{-0.2cm}

\section{Introduction}

Multiple-input multiple-output (MIMO)
relay systems utilizing multiple
antennas at the relay node have
recently received significant research
interest due to their potential to
enhance network performance
\cite{Paulraj}. An important research
problem for MIMO relay systems is the
design of optimal node processing
matrices to improve spectral efficiency
and/or error performance through
efficient utilization of transmit
channel state information (CSIT). For
example, assuming availability of CSIT
at the source and relay nodes and
linear processing at the destination,
the source and relay processing
matrices were optimized for
maximization of the relay channel
capacity and minimization of the
mean-squared error (MSE) in \cite{Witt,
Mo} and \cite{Wu,Mo2}, respectively. In
\cite{Rong}, a general framework for
linear transceiver optimization in MIMO
relay systems was provided for a large
family of objective functions, which
includes the capacity maximizing and
the MSE minimizing designs as special
cases. The extension of the results in
\cite{Rong} to multi-hop MIMO relay
systems with linear and
decision-feedback equalization
receivers was investigated in
\cite{Rong2} and \cite{Rong3},
respectively. More recently, the design
of MIMO relay systems with partial or
imperfect CSIT at source and relay was
considered in \cite{Xing,Kim}.

Existing works on transceiver design
for MIMO relay systems are based on the
assumption of frequency-nonselective
(flat) channels
\cite{Mo,Wu,Rong2,Rong3,Kim} or
frequency-selective channels in
combination with orthogonal
frequency-division multiplexing (OFDM)
\cite{Witt,Rong,Xing}. Since OFDM
decomposes a frequency-selective
channel into multiple parallel flat
subchannels, the transceiver designs
developed for frequency-nonselective
channels can be extended to OFDM based
MIMO relay systems by solving an
additional subcarrier power allocation
problem across different subcarriers.
However, its large peak-to-average
power ratio (PAPR) makes OFDM less
appealing for application in the uplink
of wireless communication systems.
Block based single-carrier transmission
with frequency-domain equalization
(SC-FDE) is a promising alternative to
OFDM due to its comparable
implementation complexity and lower
PAPR \cite{FD-DFE,FDE}. To the best of
the authors' knowledge, the
optimization of SC-FDE based MIMO relay
systems has not been considered in the
literature so far. A key difference
between SC-FDE based MIMO systems and
MIMO-OFDM systems is that the
performance of the former depends on
the MSEs of each spatial stream whereas
the performance of the latter depends
on the subcarrier MSEs. This important
difference makes the optimization of
SC-FDE based MIMO systems more
challenging than the optimization of
MIMO-OFDM systems.

In this paper, we make the common
assumption of perfect CSI at all nodes
\cite{Witt}-\cite{Rong3} and we propose
a joint transceiver design for MIMO
relay systems employing either
frequency-domain linear equalization
(FD-LE) or frequency-domain decision
feedback equalization (FD-DFE) at the
destination. We optimize the source and
relay precoding matrices for
minimization of a general function of
the MSEs of the spatial streams under
separate power constraints for source
and relay. Specifically, as objective
functions we adopt the arithmetic MSE
(AMSE), the geometric MSE (GMSE), and
the maximum MSE (maxMSE)
\cite{Rong,Polo}, which are closely
related to channel capacity and error
rate performance. For the case of
FD-LE, we show that the optimal source
and relay precoding matrices have a
structure very similar to that of the
optimal precoding matrices in MIMO-OFDM
relay systems. However, the remaining
power allocation problem is
significantly different from the power
allocation problem for MIMO-OFDM relay
systems, especially for the GMSE and
maxMSE criteria. For FD-DFE, the
considered objective functions cannot be explicitly
expressed in terms of the optimization
variables and depend on the number of feedback filter taps, which makes a direct
solution of the optimization problem
challenging. However, we can show that for FD-DFE, the three considered objective
functions are equivalent. Furthermore,
we develop an upper bound for the objective function which is independent
of number of feedback filter taps and
is a comparatively simple function of
the optimization variables.
Interestingly, this upper bound is
shown to be identical to the GMSE
objective function for the FD-LE
receiver. Consequently, a unified
solution for the power allocation
problem for both FD-LE and FD-DFE can
be obtained, which greatly simplifies
the design procedure.

\begin{figure*}[htp]
    \centering
    \includegraphics[width=6in, height=1.4in]{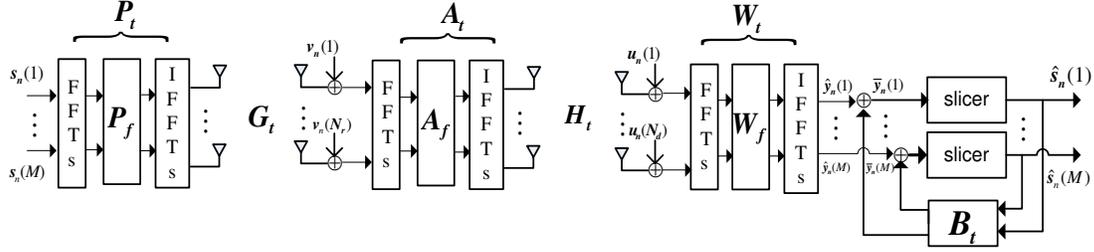}
    \caption{System model for a MIMO relay system with SC-FDE at the destination.}
{\label {fig8}}
\end{figure*}

The remainder of this paper is
organized as follows. In Section II,
the system model is presented. In
Section III, the optimal minimum MSE
(MMSE) FDE filters and the
corresponding stream error covariance
(CV) matrices are derived. The optimal
source and relay precoding matrices are
presented in Section IV. Simulation
results are given in Section V, and
some conclusions are drawn in Section
VI.

In this paper, $\rm tr(\qA)$,
$\qA^{-1}$, $\qA^{T}$, and $\qA^{\dag}$
denote the trace, inverse, transpose,
and conjugate transpose of matrix
$\qA$, respectively. $\mathbb{C}^{M
\times N}$ denotes the space of all $M
\times N$ complex matrices and $\qI_M$
is the $M\times M$ identity matrix.
$\qn \sim \mathcal{CN}(\mathbf 0 ,
\sigma_{n}^2\qI_{M})$ indicates that
$\qn\in \mathbb{C}^{M \times 1}$ is a
complex Gaussian distributed vector
with zero mean and CV matrix
$\sigma_{n}^2\qI_{M}$. $E[\cdot]$ and
$\otimes$ denote statistical
expectation and the Kronecker product,
respectively.
$\mbox{blkcirc}([\qA_1^T,\qA_2^T,...,\qA_M^T]^T)$
and
$\mbox{blkdiag}([\qA_1^T,\qA_2^T,...,\qA_M^T]^T)$
denote a block circular matrix and a
block diagonal matrix, respectively,
formed by the block-wise vector
$[\qA_1^T,\qA_2^T,...,\qA_M^T]^T$.
$\mathbb{F}_{N}$ denotes the $N \times
N$ Discrete Fourier Transform (DFT)
matrix, and $x^\star$ denotes the
optimal value of $x$.

\section{System Model\label{s2}}

We consider a block transmission system with one source node, $S$, one relay
node, $R$, and one destination node, $D$, as shown in Fig.~1. The numbers of
antennas at $S$, $R$, and $D$ are denoted by $N_s$, $N_r$, and $N_d$,
respectively. The number of spatial multiplexing data streams is
$M\leq \min\{N_s, N_r, N_d\}$. The transmission is organized in two
phases. In the first phase, $S$ processes the information symbols and
sends them to $R$. In the second phase, $R$ processes the received signal and
retransmits it to $D$. We assume there is no direct link between $S$ and $D$
due to the large pathloss and/or shadowing.

The transmit signal of each source antenna is prepended by a cyclic prefix
(CP), which comprises the last $N_{g,s}\geq L_g$ symbols of the transmitted source signal,
where $L_{g}$ denotes the largest channel impulse response (CIR) length between any
$S$-$R$ antenna pair.\footnote{For simplicity of presentation, the CP insertion is not shown in Fig.~1.}
Similarly, the transmit signal of each relay antenna is prepended by a CP, which comprises the last $N_{g,r}\geq L_{h}$
symbols of the transmitted relay signal, where $L_{h}$ is the largest CIR length between any $R$-$D$ antenna pair.

\subsection{Precoding at Source and Relay }
Let us denote the $n$th source data symbol vector as $\qs_{n}=[s_{n}(1),s_{n}(2),\ldots,s_{n}(M)]^T$,
$n=0,\ldots ,N_c-1$, where $N_c$ is the size of the data block, and $s_{n}(j)$ denotes the $n$th symbol of the $j$th
data stream, which is drawn from a constellation with variance $\sigma_{s}^2$. By stacking all $\qs_{n}$ into one vector, we obtain
$\qs=[\qs_{0}^T,\ldots, \qs_{N_c-1}^T]^T \in  \mathbb{C}^{M N_c \times 1}$. The received signal at the destination, $\qy$, can be compactly written as
\bee \label{yy}
\qy=\qH_t\qA_t\qG_t\qP_t\qs+\qH_t\qA_t\qv+\qu
\eee
with block circular matrices
\begin{align}
\qP_t&={\rm blkcirc}(
  [\qP_{t,0}^T,\cdots,\qP_{t,N_c-1}^T]^T
  ),\nnb \\
\qG_t&={\rm blkcirc}(
  [\qG_{t,0}^T,\cdots,\qG_{t,L_g-1}^T,\mathbf 0_{N_s\times N_r(N_c-L_g)}]^T ),\nnb \\
  \qA_t&={\rm blkcirc}(
  [\qA_{t,0}^T,\cdots,\qA_{t,N_c-1}^T]^T
  ),\nnb \\
\qH_t&={\rm
blkcirc}([\qH_{t,0}^T,\cdots,\qH_{t,L_h-1}^T,
\mathbf 0_{N_r\times N_d(N_c-L_h)}]^T
),\nonumber
\end{align}
where $\qP_{t,l}\in \mathbb{C}^{N_s
\times M }$, $\qG_{t,l}\in
\mathbb{C}^{N_r \times N_s}$,
$\qA_{t,l}\in \mathbb{C}^{N_r \times
N_r}$, and $\qH_{t,l}\in
\mathbb{C}^{N_d \times N_r}$ denote the
$l$th tap of the time-domain (TD)
source precoding filter, the $S$-$R$
channel, the TD relay precoding filter,
and the $R$-$D$ channel, respectively.
The noise vectors at $R$ and $D$ are
denoted by
\begin{align}
\qv&=[\qv_{0}^T,\ldots,
\qv_{N_c-1}^{T}]^{T}\sim
\mathcal{CN}(\mathbf 0 ,
\sigma_{v}^2\qI_{N_rN_c}),\nnb \\
\qu&=[\qu_{0}^T,\ldots,
\qu_{N_c-1}^{T}]^{T}\sim
\mathcal{CN}(\mathbf 0 ,
\sigma_{u}^2\qI_{N_dN_c}),
\end{align}
 where $\qv_{n}=[v_{n}(1),v_{n}(2),\ldots,v_{n}(N_r)]^T$ and $\qu_{n}=[u_{n}(1),u_{n}(2),\ldots,u_{n}(N_d)]^T$
 denote the additive white Gaussian noise (AWGN) vectors at $R$ and $D$ at time $n$, respectively.
The block circular matrices \{$\qP_t$, $\qG_t$, $\qA_t$, $\qH_t$\} can be decomposed as
\begin{align}
\qP_t&=\qF_{N_s}^\dag\qP_f\qF_{M},\quad
\qG_t=\qF_{N_r}^\dag\qG_f\qF_{N_s},\nnb
\\
\qA_t&=\qF_{N_r}^\dag\qA_f\qF_{N_r},\quad
\qH_t=\qF_{N_d}^\dag\qH_f\qF_{N_r},\quad
\end{align}
with $\qF_{\Upsilon}=
\mathbb{F}_{N_c}^\dag \otimes
\qI_{\Upsilon}$, $\Upsilon \in
\{M,N_s,N_r,N_d\}$, $\qX_f={\rm
blkdiag}([\qX_{0}^T,\cdots,\qX_{N_c-1}^T]^T)$,
and $\qX_f \in \{\qP_f, \qG_f, \qA_f,
\qH_f\}$. Here, $\qP_{k}\in
\mathbb{C}^{N_s \times M }$,
$\qG_{k}\in \mathbb{C}^{N_r \times
N_s}$, $\qA_{k}\in \mathbb{C}^{N_r
\times N_r}$, and $\qH_{k}\in
\mathbb{C}^{N_d \times N_r}$ represent
the frequency-domain (FD) source
precoding, $S$-$R$ channel, relay
precoding, and $R$-$D$ channel matrices
for the $k$th frequency tone,
respectively. We define the equivalent
end-to-end channel matrix
$\qQ_t=\qH_t\qA_t\qG_t\qP_t$ and
express it as
$\qQ_t=\qF_{N_d}^\dag\qQ_f\qF_{M}$,
where $\qQ_f={\rm
blkdiag}([\qQ_{0}^T,\cdots,\qQ_{N_c-1}^T]^T)$
with $\qQ_{k}=\qH_k\qA_k\qG_k\qP_k\in
\mathbb{C}^{N_d \times M}$ representing
the equivalent $S$-$D$ channel matrix
on the $k$th frequency tone.
Furthermore, the CV matrix of the
equivalent noise vector
$\qn=\qH_t\qA_t\qv+\qu$ can be obtained
as \bee \label{Kf}
\qK=E[\qn\qn^\dag]=\qF_{N_d}^\dag\qK_f\qF_{N_d},
\eee where
$\qK_f=\sigma_v^2\qH_f\qA_f\qA_f^\dag\qH_f^\dag+\sigma_u^2\qI_{N_d
N_c}$.

\subsection{Equalization at the Destination}

The received signal $\qy$ is
transformed into the FD using
$\qF_{N_d}$ and equalized by an FD
feedforward filter (FFF) $ \qW_{f}={\rm
blkdiag}([\qW_{0}^T,\cdots,\qW_{N_c-1}^T]^T)
$. The resulting signal is then
transformed into the TD using
$\qF_{M}^\dag $ resulting in
 \bee
\label{hat-y}\hat\qy &=& \qW_t\qy ,
\eee where
$\qW_t=\qF_{M}^\dag\qW_f\qF_{N_d}$ is
the equivalent TD FFF and $\hat
\qy=[\hat \qy_{0}^T,\ldots,\hat
\qy_{N_c-1}^{T}]^{T}$ with $\hat
\qy_{n}=$\linebreak
$[\hat y_{n}(1),\hat
y_{n}(2),\ldots,\hat y_{n}(M)]^T$
denoting the $n$th signal vector at the
output of the FFF. If FD-LE is
employed, $\hat \qy_{n}$ is the
decision variable for the $n$th source
symbol vector. On the other hand, for
FD-DFE, $\hat \qy_{n}$ is further
processed using a TD feedback filter
(FBF) to perform interference
cancelation. Assuming correct feedback
at the output of the
slicer\footnote{Correct feedback is a
common assumption for the design of
decision feedback equalizers
\cite{FD-DFE}-\cite{Dhahir}.}, the
signal corresponding to the $m$th data
stream at time $n$ at the input of the
slicer is given by \bee \label{dfe}\bar
y_{n}(m) = \hat y_n(m)-
\sum_{l=0}^{N_{fb}}[\qB_{t,l}]_{(m,:)}\qs_{(n-l){{\rm
mod} N_c}}, \eee where $\qB_{t,l}$
denotes the coefficient matrix of the
$l$th tap of the FBF, $[\qX]_{(m,:)}$
stands for the $m$th row of matrix
$\qX$, $N_{fb}$ is the number of
feedback taps, and $(\cdot){\rm mod} N$
denotes the modulo-$N$ operation. From
(\ref{dfe}) we observe that at the
initial stage of the feedback process,
i.e., when $n=0$,
$[\qs_{N_c-N_{fb}},\cdots,\qs_{N_c}]$
has to be known \emph{a priori}, which
can be accomplished by using known
training symbols. Nevertheless, for
detection of $s_{n}(m)$,
$[s_{n}(1),\cdots,s_{n}(m-1)]$ is still
unknown. Therefore, for causal
detection, the $0$th tap of the FBF,
i.e., $\qB_{t,0}$, has to be a lower
triangular matrix  with zero diagonal
entries. By collecting all $\bar
y_{n}(m)$ into a vector $\bar \qy=[\bar
\qy_{0}^T,\ldots, \bar
\qy_{N_c-1}^{T}]^{T}$ with $\bar
\qy_{n}=[\bar y_{n}(1),\bar
y_{n}(2),\ldots,\bar y_{n}(M)]^T$, we
arrive at \bee \bar\qy = \hat\qy- \qB_t
\qs, \eee where $\qB_t={\rm
blkcirc}([\qB_{t,0}^T,\cdots,\qB_{t,N_{fb}}^T,
\mathbf 0_{M\times M (N_c-N_{fb}-1)}]^T
) \in \mathbb{C}^{M N_c\times M N_c}$
is the equivalent TD-FBF. Thus, the
error vector at the input of the slicer
can be expressed as \bee \label{ee}
\qe= \bar\qy-\qs = \hat\qy-
\underbrace{(\qB_t+\qI_{M
N_c})}_{\qC_t} \qs= \hat\qy- \qC_t \qs,
\eee where $\qC_t={\rm
blkcirc}([\qC_{t,0}^T,\cdots,
  \qC_{t,N_{fb}}^T,\mathbf 0^T]^T)$ with $\qC_{t,n}=\qB_{t,n}, \forall n\neq 0$ and $\qC_{t,0}=\qB_{t,0}+\qI_M$.
  The block circular matrix $\qC_t$ can be decomposed as $\qC_t=\qF_M^\dag\qC_f\qF_M$, where $\qC_f={\rm blkdiag}
  ([\qC_{0}^T,\cdots,$ $\qC_{N_c-1}^T]^T )$. We note that by setting $\qC_t=\qI_{MN_c}$, FD-DFE reduces to FD-LE.

\section{Optimal Minimum MSE FDE Filter Design} 
In this section, we derive the optimal minimum MSE equalization filters at the
 destination and the corresponding
error CV matrices at the output of the
equalizer as functions of the source
and relay precoding matrices. Combining
(\ref{yy})-(\ref{hat-y}) and
(\ref{ee}), the error CV matrix,
$\qE\triangleq E[\qe\qe^\dag]$, can be
expressed as \bee \label{MSE0}\qE&=&
\qF_M^\dag\Big(\qW_f
(\sigma_s^2\qQ_f\qQ_f^\dag+\qK_f)\qW_f^\dag
 -\sigma_s^2\qW_f\qQ_f\qC_f^\dag\nnb \\
 &&-\sigma_s^2\qC_f\qQ_f^\dag\qW_f^\dag+\sigma_s^2\qC_f\qC_f^\dag\Big)\qF_M.
 \eee
Following the conventional equalization design methodology, the optimum FD FFF is obtained by minimizing the sum of stream MSEs,
${\rm tr}(\qE)$, which yields
\bee
\label{Wf}\qW_f^\star=\sigma_s^2 \qC_f\qQ_f^\dag\left(\sigma_s^2\qQ_f\qQ_f^\dag+\qK_f\right)^{-1}.
   \eee
Substituting $\qW_f^\star$ into (\ref{MSE0}) and simplifying the resulting expression, the CV matrix can be rewritten as
 \bee
\label{circ}\qE=\sigma_s^2\qF_M^\dag \qC_f\mathbf \Psi_f^{-1}\qC_f^\dag\qF_M,
\eee
where
$\mathbf \Psi_f= {\rm blkdiag}([\mathbf
\Psi_{0}^T,\cdots,\mathbf
\Psi_{N_c-1}^T]^T )\in \mathbb{C}^{M
N_c\times M N_c}$ with
 \begin{align}
\label{Psi}\mathbf\Psi_{k}&=\sigma_s^2\qQ_{k}^\dag\left
(\sigma_v^2\qH_k\qA_k\qA_k^\dag\qH_k^\dag+\sigma_{u}^2\qI_{N_d}\right)^{-1}\qQ_{k}
+\qI_M.
 \end{align}
From (\ref{circ}) we observe that $\qE$
is a block circular matrix. Hence, its
block diagonal entries, $\qE_n \in
\mathbb{C}^{M \times M}, \forall n$,
are identical, i.e., $\qE_n=\hat \qE,
\forall n$. Since the diagonal entries
of $\qE_n$ represent the MSEs of the
different spatial streams at time $n$,
symbols from the same stream experience
identical MSEs. CV matrix $\hat\qE$ can
be conveniently written as \bee
\label{Core-E} \hat\qE=
\frac{\sigma_s^2}{N_c}\sum_{k=1}^{N_c}
\qC_{k}^\dag\mathbf
\Psi_{k}^{-1}\qC_{k}.
 \eee
\subsection{CV Matrix and Filter Design for FD-LE}
Eqs.~(\ref{Wf}) and (\ref{Core-E}) are valid for both FD-LE and FD-DFE. For the special case of FD-LE,
we can set $\qC_f=\qI_{MN_c}$, which leads to
 \bee
\qW_f^\star=\sigma_s^2\qQ_f^\dag\left(\sigma_s^2\qQ_f\qQ_f^\dag+\qK_f\right)^{-1}
 \eee
and CV matrix
\bee
\hat\qE_{\rm FD-LE}=\frac{\sigma_s^2}{N_c}\sum_{k=1}^{N_c} \mathbf \Psi_{k}^{-1}.
 \eee
Interestingly, $\hat\qE_{\rm FD-LE}$ is equal to the arithmetic mean of the subcarrier CV matrices, $\mathbf \Psi_{k}^{-1}$, in MIMO-OFDM relay systems
\cite{Rong}.
\subsection{CV Matrix and Filter Design for FD-DFE}
The FD-DFE CV matrix depends on the
FD-FBF matrices $\qC_k$. Since the FBF
has to be implemented in the TD, we
express $\qC_k$ in terms of the TD FBF
coefficients $\qC_{t,n}$ as
$\qC_k=\sum_{n=0}^{N_{fb}}\qC_{t,n}
e^{-j\frac{2\pi}{N_c}nk}$. Now,
(\ref{Core-E}) can be rewritten as \bee
&&\hat\qE_{\rm FD-DFE}
  \nnb \\ &=&  \frac{\sigma_s^2}{N_c}\sum_{k=0}^{N_{c}-1}
 \Big[\sum_{n=0}^{N_{fb}}\qC_{t,n}  e^{-j\frac{2\pi}{N_c}nk}\mathbf \Psi_{k}^{-1}
 \sum_{m=0}^{N_{fb}}\qC_{t,m}^\dag
  e^{-j\frac{2\pi}{N_c}mk}\Big] \nnb \\
 &=&  \frac{\sigma_s^2}{N_c}\sum_{n=0}^{N_{fb}}\sum_{m=0}^{N_{fb}}\left(\qC_{t,n}\sum_{k=0}^{N_{c}-1} \mathbf \Psi_{k}^{-1}e^{-j\frac{2\pi}{N_c}(n-m)k}
 \qC_{t,m}^\dag
\right)\nnb
\\ &=&\frac{\sigma_s^2}{N_c}\hat\qC \qZ
\hat\qC^\dag. \label{FBF0} \eee To
simplify the notation, we have used the
definitions
$\hat\qC=[\qC_{t,0},...,\qC_{t,N_{fb}}]$
and \bee \qZ=\left[
      \begin{array}{cccc}
        \qz_{0} & \qz_{1} & \ldots & \qz_{N_{fb}} \\
        \qz_{1}^\dag & \qz_{0} & \ldots & \qz_{N_{fb}-1} \\
        \vdots &  & \ddots & \vdots \\
        \qz_{N_{fb}}^\dag & \qz_{N_{fb}-1}^\dag & \ldots & \qz_{0} \\
      \end{array}
    \right],\label{ZZ}
 \eee
where
$\qz_{n}=\sum_{k=0}^{N_c-1}\mathbf
\Psi_{k}^{-1}e^{j\frac{2\pi}{N_c}kn}$.
The optimal $\hat\qC$ minimizing ${\rm
tr}\{\qE_{\rm FD-DFE}\}$ can be
obtained by solving \bee \label{eq19}
\min_{\hat\qC \mathbf\Theta =\qC_{t,0}}
&& {\rm tr}(\hat\qC \qZ \hat\qC^\dag ),
\eee where
$\mathbf\Theta=[\qI_{M},\mathbf
0_{M\times (N_{fb}-1)}]$. Problem
(\ref{eq19}) can be solved using the
standard Lagrange multiplier method,
leading to \cite{Dhahir} \bee
\hat\qC^\star=\qC_{t,0}(\mathbf \Theta
\qZ^{-1}\mathbf \Theta)^{-1}\mathbf
\Theta^\dag\qZ^{-1}. \eee By
partitioning $\qZ$ and $\qZ^{-1}$ as
\bee \label{eq20}
\qZ=\left[
                       \begin{array}{cc}
                         \qZ_{11} & \qZ_{12} \\
                         \qZ_{12}^\dag & \qZ_{22} \\
                       \end{array}
                     \right], \qZ^{-1}=\left[
                       \begin{array}{cc}
                         \qU_{11} & \qU_{12} \\
                         \qU_{12}^\dag & \qU_{22} \\
                       \end{array}
                     \right],
\eee
where $\{\qZ_{11}, \qU_{11}\}\in \mathbb{C}^{M \times M}$, $\{\qZ_{12}, \qU_{12}\}\in \mathbb{C}^{M \times M N_{fb}}$, and $\{\qZ_{22} ,\qU_{22}\}\in\mathbb{C}^{M N_{fb}\times M N_{fb}}$,
$\hat\qC$ can be further expressed as
\bee
\hat\qC=[\qC_{t,0}^{-1}, \quad \qC_{t,0}^{-1}\qZ_{12}\qZ_{22}^{-1}].\label{Ct0}
\eee
Substituting (\ref{Ct0}) into (\ref{FBF0}), the FD-DFE CV matrix can be rewritten as
\bee
\qE_{\rm FD-DFE}=\frac{\sigma_s^2}{N_c} \qC_{t,0}\qU_{11}^{-1}\qC_{t,0}^\dag.
\label{E8}
\eee
To complete the FBF design, the optimal $\qC_{t,0}$ has to be determined. To this end, we introduce the Cholesky decomposition of $\qU_{11}^{-1}$ as
\bee
\label{Chol}\qU_{11}^{-1}=\qL\qD\qL^\dag,
\eee
where $\qL$ is a unit-diagonal lower triangular matrix and $\qD$ is a diagonal matrix with positive main diagonal entries. Now, it is easy to verify that the optimal $\qC_{t,0}$ which minimizes
${\rm tr}(\qE_{\rm FD-DFE})$ is given by $\qC_{t,0}^\star=\qL^{-1}$. Hence, the optimal $\hat\qC$ is obtained as
\bee
\hat\qC^\star=[\qL^{-1}, \quad \qL^{-1}\qZ_{12}\qZ_{22}^{-1}].
\eee
The structure of the optimal FBF can be interpreted  as follows: $\qL^{-1}\in \mathbb{C}^{M\times M}$ is a lower-triangular matrix which cancels the inter-antenna
interference (IAI) in the current time slot, and the remaining FBF coefficients, $\qL^{-1}\qZ_{12}\qZ_{22}^{-1}\in \mathbb{C}^{M\times MN_{fb}}$, cancel both the IAI
and inter-symbol interference (ISI) stemming from the previous $N_{fb}-1$ time slots. Inserting $\hat\qC^\star$ into (\ref{E8}) the error CV matrix can be written as
\bee
\label{mse5} \hat\qE_{\rm FD-DFE} =\frac{\sigma_s^2}{N_c} \qD.
\label{E-dfe8}
\eee
Interestingly, unlike for FD-LE, the error CV matrix for FD-DFE is a diagonal matrix, which also depends on the number of FBF taps $N_{fb}$.
\section{Source and Relay Precoding Matrix Optimization\label{s3}}
Exploiting the expressions for the error CV matrix obtained in the previous section, in this section, we minimize a general function $f(\mbox{diag}[\hat \qE])$ of the spatial stream
MSEs at the output of the equalization filter under separate constraints on the powers consumed at the source and the
relay\footnote{We note that our derivations can be extended to a joint source and relay power constraint. While such a joint power constraint offers more degrees of freedom for the system design,
separate power constraints appear more practical since usually the source node and the relay node have their own power supplies.}.
Mathematically, the optimization problem is stated as
\bee
\min_{\{\qP_k,\qA_k\}} && f(\mbox{diag}[\hat \qE]) \nnb\\
     \mbox{s.t.} && {\rm tr}\left(E[\qx\qx^\dag]\right)\leq P_S,
     \quad {\rm tr}\left(E[\qt\qt^\dag]\right)\leq P_R,  \label{prob0}
\eee
where $\hat\qE = \hat \qE_{\rm FD-LE}$ and $\hat\qE = \hat \qE_{\rm FD-DFE}$ for FD-LE and FD-DFE, respectively, $P_S$ and $P_R$ are the power budgets for $S$ and $R$, respectively, and $\mbox{diag}[\qM]$
denotes a vector containing the main diagonal entries of matrix $\qM$. The objective function $f(\mbox{diag}[\hat \qE])$ can be either a Schur-convex or
a Schur-concave increasing function with respect to (w.r.t.) $\mbox{diag}[\hat\qE]$ \cite{Polo}. For concreteness, in this paper, we consider the three most
important objective functions of this type, namely
 the arithmetic MSE (AMSE), the geometric MSE (GMSE), and the maximum MSE (maxMSE)
\bee \label{obj} f(\mbox{diag}[\hat \qE])= \begin{cases}\begin{array}{c} \sum_{m=1}^{M} \hat\qE_{mm}^{}, \quad \mbox{AMSE} \\
                                                    \prod_{m=1}^{M} \hat\qE_{mm}^{},\quad \mbox{GMSE}  \\
                                                    \max_{m=1}^{M} \hat\qE_{mm}^{}, \quad \mbox{maxMSE} \\
                                                   \end{array} \end{cases},
\eee
where $\hat \qE_{mm}^{}$ denotes the $m$th diagonal entry of $\hat\qE$. The AMSE and GMSE are Schur-concave functions while the maxMSE is a
Schur-convex function w.r.t.~$\mbox{diag}[\hat\qE]$ \cite{Polo}. We note that similar objective functions have been considered for MIMO-OFDM based
relay systems in \cite{Rong}. However, for MIMO-OFDM based relay systems, the AMSE, GMSE, and maxMSE are the sum, product, and
maximum of the subcarrier MSEs of different spatial sub-streams. In contrast, in (\ref{obj}), these three quantities are the the sum, product, and
maximum of the sub-stream MSEs of a single carrier.

The power consumptions at source and relay are given by
\begin{align}
{\rm tr}\left(E[\qx\qx^\dag]\right)&=\sigma_s^2 \sum_{k=0}^{N_c-1}{\rm tr} \left(\qP_k\qP_k^\dag\right),  \nnb \\
{\rm
tr}\left(E[\qt\qt^\dag]\right)&=\sum_{k=0}^{N_c-1}{\rm
tr}\left(\qA_k\left(\sigma_s^2\qG_k\qP_k\qP_k^\dag\qG_k^\dag
+\sigma_v^2\qI_{N_r}\right)\qA_k^\dag\right).
 \label{constr0}
\end{align}
Since the optimization variables in
(\ref{prob0}) are matrices, solving the
problem directly would incur high
complexity. In the following, we will
first derive the structure of the
optimal precoding matrices. Knowing
this structure will allow us to
transform  the optimization problem
into an optimization problem with
scalar variables.
\subsection{Structure of the Optimal Precoding Matrices for FD-LE}
We first derive the structure of the optimal source and relay precoding matrices for FD-LE. We begin by introducing the
following singular-value decompositions (SVDs) of the channel matrices
\begin{align}
\qG_k&=\qU_{G}^{(k)}\mathbf
\Lambda_{G}^{(k)} \qV_{G}^{(k)\dag}
,\quad \qH_k=\qU_{H}^{(k)}\mathbf
\Lambda_{H}^{(k)} \qV_{H}^{(k)\dag},
\quad \forall k,
 \end{align}
where $\qU_{G}^{(k)}\in
\mathbb{C}^{N_r\times N_r}$,
$\qV_{G}^{(k)} \in
\mathbb{C}^{N_s\times N_s}$ and
$\qU_{H}^{(k)}\in \mathbb{C}^{N_d\times
N_d}$, $\qV_{H}^{(k)}\in
\mathbb{C}^{N_r\times N_r}$ are the
singular-vector matrices of $\qG_k$ and
$\qH_k$, respectively. Furthermore,
$\mathbf \Lambda_{G}^{(k)} \in
\mathbb{C}^{N_r\times N_s}$ and
$\mathbf \Lambda_{H}^{(k)} \in
\mathbb{C}^{N_d\times N_r}$ are the
singular-value matrices of $\qG_k$ and
$\qH_k$, respectively, and have both
increasing main diagonal elements.

\begin{theo} For the optimization problem in (\ref{prob0}), the following structures of $\qP_k$ and $\qA_k$ are optimal
\begin{align}
\label{structure1} \qP_k^\star&=
\bar{\qV}_{G}^{(k)} \mathbf
\Lambda_{P}^{(k)}\qV_0, \quad
\qA_k^\star= \bar{\qV}_{H}^{(k)}\mathbf
\Lambda_{A}^{(k)}
\bar{\qU}_{G}^{(k)},\quad \forall k,
\end{align}
where $\bar{\qV}_{G}^{(k)}$, $\bar{\qU}_{G}^{(k)}$, and $\bar{\qV}_{H}^{(k)}$ contain the $M$ right-most columns of ${\qV}_{G}^{(k)}$,
${\qU}_{G}^{(k)}$, and ${\qV}_{H}^{(k)}$, respectively. $\mathbf \Lambda_{P}^{(k)}$ and $\mathbf \Lambda_{A}^{(k)}$ are $M\times M$ diagonal matrices with the
$m$th diagonal element denoted by $p_{km}$ and $a_{km}$, respectively.
For Schur-concave functions, $\qV_0=\qI_M$. For Schur-convex functions, $\qV_0$ is a unitary matrix chosen in such a way that all main diagonal entries of
$\hat\qE$ are equal\footnote{ In practice, $\qV_{0}$ can be chosen as a DFT matrix or a Hadamard matrix with appropriate dimensions.}.
\end{theo}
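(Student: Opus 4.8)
The plan is to show that any feasible $\{\qP_k,\qA_k\}$ for \eqref{prob0} can be replaced, without increasing the objective, by a pair of the form \eqref{structure1}, after which $\qV_0$ is pinned down by the Schur--Horn theorem. I would first read off from \eqref{Psi}, \eqref{Kf} and \eqref{constr0} that $\qP_k$ affects the objective and the relay power constraint only through the product $\qG_k\qP_k$ (since $\qQ_k=\qH_k\qA_k\qG_k\qP_k$ and the relay power depends on $\qG_k\qP_k\qP_k^\dag\qG_k^\dag$), while the source constraint depends on $\qP_k$ only through $\|\qP_k\|_F^2$. Hence replacing $\qP_k$ by the least-Frobenius-norm matrix producing the same $\qG_k\qP_k$ is feasible and does not change the objective, and that matrix lies in the row space of $\qG_k$, so $\qP_k=\qV_G^{(k)}\tilde\qP_k$ with $\tilde\qP_k$ supported on the rows of nonzero singular values. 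An analogous argument on the relay side lets me confine the relay \emph{input} to $\mathrm{range}(\qG_k\qP_k)$ -- any component acting on the orthogonal complement forwards noise only, wasting relay power and enlarging $\qK_f$ -- and the relay \emph{output} to $\mathrm{range}(\qH_k^\dag)$, the span of the right singular vectors of $\qH_k$ with nonzero singular value.

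\textbf{Step 2: diagonalize and establish the monotone mode pairing.}
After Step 1, $\mathbf\Psi_k$ in \eqref{Psi} depends only on a few ``inner'' unitary factors (the left factor of $\qP_k$ in the $\qV_G^{(k)}$ basis and the two factors of $\qA_k$ in the $\qU_G^{(k)}$ and $\qV_H^{(k)}$ bases), the nonnegative magnitude matrices, and one common output unitary $\qV_0$. I would then invoke the majorization argument used for MIMO-OFDM relays in \cite{Rong} (see also \cite{Polo}): for fixed magnitudes, any nontrivial inner mixing of the channel eigenmodes moves $\mathrm{diag}[\hat\qE]$ to a point that is worse for the relevant objective class, which forces the inner unitaries to be identities and $\mathbf\Lambda_P^{(k)},\mathbf\Lambda_A^{(k)}$ to be diagonal; combined with the scalar power allocation of Section~IV it also forces the $M$ active modes to be the dominant singular directions of $\qG_k$ and $\qH_k$, paired index by index -- hence the right-most columns $\bar\qV_G^{(k)},\bar\qU_G^{(k)},\bar\qV_H^{(k)}$, since $\mathbf\Lambda_G^{(k)},\mathbf\Lambda_H^{(k)}$ have increasing diagonals. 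Substituting this structure into \eqref{Psi} yields $\mathbf\Psi_k=\qV_0^\dag(\sigma_s^2\mathbf\Delta_k+\qI_M)\qV_0$ with $\mathbf\Delta_k$ diagonal, so that \eqref{Core-E} gives $\hat\qE=\tfrac{\sigma_s^2}{N_c}\qV_0^\dag\big(\sum_k(\sigma_s^2\mathbf\Delta_k+\qI_M)^{-1}\big)\qV_0$.

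\textbf{Step 3: choose $\qV_0$.}
The last display fixes $\lambda(\hat\qE)$ through the magnitudes, with $\qV_0$ merely conjugating $\hat\qE$; by Schur--Horn, $\mathrm{diag}[\hat\qE]$ then sweeps exactly the vectors majorized by $\lambda(\hat\qE)$. For a Schur-concave increasing $f$ (GMSE, and trivially AMSE), $x\prec y$ implies $f(x)\ge f(y)$, so the minimum is attained at $x=\lambda(\hat\qE)$, i.e.\ at diagonal $\hat\qE$, which is the choice $\qV_0=\qI_M$. For the Schur-convex $f$ (maxMSE), $x\prec y$ implies $f(x)\le f(y)$, so the minimum is at the equal-entry vector $\tfrac1M(\mathrm{tr}\,\hat\qE)\mathbf 1_M$, attained by any unitary $\qV_0$ equalizing the diagonal of $\qV_0^\dag(\sum_k(\sigma_s^2\mathbf\Delta_k+\qI_M)^{-1})\qV_0$, e.g.\ a normalized DFT or Hadamard matrix. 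This is precisely the stated dichotomy, and the remaining scalar optimization over $\{p_{km},a_{km}\}$ is deferred to Section~IV.

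\textbf{Main obstacle.}
I expect Step 2 to be the hard part: because the relay couples the $S$--$R$ and $R$--$D$ subspaces on both sides, the single-hop ``diagonalize along the SVD'' reasoning does not transfer verbatim, and one must show that an inner rotation -- which perturbs the per-stream end-to-end gains in $\mathbf\Delta_k$, the forwarded-noise covariance $\qK_f$, and the relay power simultaneously -- cannot help, through a joint majorization/rearrangement argument. A secondary wrinkle not present in \cite{Rong} is that the objective involves $\mathrm{diag}[\sum_k\mathbf\Psi_k^{-1}]$ rather than the individual $\mathrm{diag}[\mathbf\Psi_k^{-1}]$; this is handled by insisting on the same output unitary $\qV_0$ at all tones, which Step 3 shows costs nothing.
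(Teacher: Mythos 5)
Your overall architecture --- reduce to a jointly diagonal structure by a majorization argument, then fix $\qV_0$ by the Schur-convex/Schur-concave dichotomy --- matches the paper's, and your Step 3 is essentially the paper's own treatment (its Lemmas 5 and 7 supply exactly the Schur--Horn-type facts you invoke: $\frac{1}{M}{\rm tr}(\hat\qE)\mathbf 1\prec\mbox{diag}[\hat\qE]$ and the existence of a unitary equalizing the diagonal). The problem is that your Step 2 is the entire content of the theorem, and you do not prove it: you invoke the single-hop/OFDM argument of \cite{Rong}, concede that it ``does not transfer verbatim'' because the relay couples the two hops, and flag the required ``joint majorization/rearrangement argument'' as the main obstacle without supplying it. That is precisely the gap. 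As written, the claims that inner rotations cannot help, that $\mathbf\Lambda_P^{(k)}$ and $\mathbf\Lambda_A^{(k)}$ must be diagonal, and that the active modes pair with the $M$ largest singular values of $\qG_k$ and $\qH_k$ are asserted, not derived.

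The paper closes this gap with a specific device you would need to reproduce. Writing $\mathbf\Psi_k^{-1}=\qI_M-\mathbf\Upsilon_k$ via the matrix inversion lemma, it changes variables to $\qX_k=\sigma_s\qG_k\qP_k$ and $\qJ_k=\qH_k\qA_k(\qX_k\qX_k^\dag+\sigma_v^2\qI_{N_r})^{1/2}$. This substitution decouples the two hops: $\mathbf\Upsilon_k$ becomes a product of factors depending separately on the SVDs of $\qX_k$ and $\qJ_k$, so Lemmas 1 and 2 (diagonal majorized by eigenvalues; singular values of a product weakly majorized by the product of ordered singular values) give $\mbox{diag}[\mathbf\Upsilon_k]\prec_w\mbox{diag}[(\qI_M+\sigma_v^2\mathbf\Lambda_X^{(k)-2})^{-1}(\qI_M+\sigma_u^2\mathbf\Lambda_J^{(k)-2})^{-1}]$, with the bound attained when the inner unitaries are identities; Lemma 6 then propagates the majorization through the sum over tones (the ``wrinkle'' you mention is resolved there, not by the choice of $\qV_0$), and Lemma 3 converts it into the objective inequality. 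Separately, both power constraints reduce to traces of the form ${\rm tr}(\mathbf\Lambda_{G}^{(k)-1}\qQ_3^{(k)\dag}\mathbf\Lambda_X^{(k)2}\qQ_3^{(k)}\mathbf\Lambda_{G}^{(k)-1})$, and Lemma 4 shows these are minimized by aligning with the $M$ largest singular values --- this, not the scalar power allocation of Section IV, is what forces the right-most columns $\bar\qV_G^{(k)}$, $\bar\qU_G^{(k)}$, $\bar\qV_H^{(k)}$. Your Step 1 reduction to the row space of $\qG_k$ is harmless but does not substitute for any of this (generically $\qG_k$ has full row rank, so it selects nothing).
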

 \begin{proof} Please refer to the Appendix.
\end{proof}

Theorem 1 implies that for Schur-concave functions, the source and relay precoding matrices jointly
diagonalize the MIMO relay channels at each frequency tone, while for Schur-convex functions, the precoding
matrices diagonalize the channels up to a unitary rotation at the source. Therefore, the original optimization
problem involving matrix variables can be transformed into a scalar power optimization problem across different
spatial beams and frequency tones.
\subsection{Transformation of Optimization Problem for FD-LE}
Since the maxMSE is a Schur-convex function, according to Theorem 1, the unitary matrix, $\qV_{0}$, should be
chosen to make all diagonal entries of $\hat\qE$ equal. Recall from Section III that $\mbox{diag}[\hat\qE]$
represents the MSE of different spatial streams and all symbols of a particular stream have the same MSE. This means
that for maxMSE, identical MSE is achieved for all symbols in the SC-FDE system. Hence, the remaining maxMSE
power allocation problem is identical to that for the AMSE criterion. The only difference between the solutions for maxMSE and AMSE
minimization lies in the choice of $\qV_0$. We note that this is not true for MIMO-OFDM relay systems, where the unitary transformation
at the source only achieves identical spatial MSEs on each subcarrier, while the MSEs across subcarriers are in general different.
To balance these MSEs, multilevel waterfilling has to be carried out in such MIMO-OFDM relay systems, which entails a much higher complexity
compared to the single-level waterfilling required for the AMSE criterion, cf.~\cite{Rong}. Additionally, for MIMO-OFDM relay systems,
the unitary rotation matrices are in general different on each subcarrier as the number of transmitted data streams may
vary from subcarrier to subcarrier. However, for SC-FDE, the rotation matrices are identical for all frequency tones since the number of
data streams is determined in the time domain.

Because of the equivalence of the power allocation problems for maxMSE and AMSE, in the following, we focus on the power allocation problem for the AMSE and
GMSE criteria. From (\ref{Psi}) and (\ref{structure1}), we obtain
 $ \mathbf \Psi_{k}=\qV_0^{\dag}\mathbf {\Phi}_{k}\qV_0$ with
\bee \mathbf
{\Phi}_{k}&=&\sigma_s^2\mathbf
\Lambda_{P}^{(k)2}\bar{\mathbf
\Lambda}_{G}^{(k)2}\mathbf
\Lambda_{A}^{(k)2}\bar{\mathbf
\Lambda}_{H}^{(k)2}\nnb \\
&& \left(\sigma_v^2\mathbf
\Lambda_{A}^{(k)2}\bar{\mathbf
\Lambda}_{H}^{(k)2}+
\sigma_u^2\qI_{M}\right)^{-1}+\qI_M,
\eee where
$\bar{\mathbf\Lambda}_{G}^{(k)}$ and
$\bar{\mathbf\Lambda}_{H}^{(k)}$ are
diagonal matrices whose diagonal
entries contain the $M$ largest
singular values of ${\qG}^{(k)}$ and
${\qH}^{(k)}$, respectively. Now, we
can rewrite the objective functions as
\begin{align} f_{\rm
X}(\mathbf{\Phi})=\begin{cases}\begin{array}{c}
\sum_{m=1}^{M}
\left(\frac{1}{N_c}\sum_{k=0}^{N_c-1}{\Phi}_{km}^{-1}\right),
\mbox { X=AMSE}   \\
\sum_{m=1}^{M}\log_2
\left(\frac{1}{N_c}\sum_{k=0}^{N_c-1}{\Phi}_{km}^{-1}\right),
\mbox { X=GMSE}
\end{array}\end{cases},
 \end{align}
where
$\mathbf{\Phi}=\{{\Phi}_{km},\forall
k,m\}$ with \bee {\Phi}_{km}=
\frac{\sigma_s^2p_{km}^2 g_{km}^2
a_{km}^2 h_{km}^2} {\sigma_v^2 a_{km}^2
h_{km}^2+\sigma_u^2}+1. \eee Here,
$g_{km}$ and $h_{km}$ denote the $m$th
main diagonal elements of $\bar{\mathbf
\Lambda}_{G}^{(k)}$ and $\bar{\mathbf
\Lambda}_{H}^{(k)}$, respectively, and
represent the corresponding channel
gains of the $m$th spatial stream on
the $k$th frequency tone. Note that,
for the GMSE criterion, we have taken
the logarithm of the original objective
function to facilitate the subsequent
optimization. Due to the monotonicity
of the logarithm, the new objective
function has the same optimal solution
as the original one. The new objective
function can be rewritten as \bee
\label{CAP} -\sum_{m=1}^{M}\log_2
\left({{\rm SINR}_m}+1\right), \eee
where ${{\rm
SINR}_m}=(\frac{1}{N_c}\sum_{k=0}^{N_c-1}{\Phi}_{km}^{-1})^{-1}-1$
is the
signal-to-interference-plus-noise ratio
(SINR) of the $m$th data stream. This
implies that (\ref{CAP}) is essentially
the negative sum of the channel
capacities of different spatial
streams. Therefore, minimization of the
GMSE is equivalent to the maximization
of the capacity of the considered MIMO
SC-FDE relay system. By exploiting
(\ref{structure1}), the expression for
the power consumption on the left hand
side of the constraints in
(\ref{prob0}) can be expressed as \bee
{\rm
tr}\left(E[\qx\qx^\dag]\right)&=&\sigma_s^2
\sum_{k=0}^{N_c-1}{\rm tr}
\left(\mathbf
\Lambda_P^{(k)2}\right)=\sum_{k=0}^{N_c-1}\sum_{m=1}^{M}
P_{s,km}\nnb \\
{\rm tr}\left(E[\qt\qt^\dag]\right)&=&
 \sum_{k=0}^{N_c-1}{\rm tr}\left(\mathbf \Lambda_A^{(k)2} \left(\sigma_s^2\mathbf \Lambda_P^{(k)2}
  \bar{\mathbf  \Lambda}_G^{(k)2} +\sigma_u^2\qI_M\right)\right)
  \nnb \\&=&\sum_{k=0}^{N_c-1}\sum_{m=1}^{M}P_{r,km},
\eee where \bee \label{constr2}
P_{s,km}=\sigma_s^2 p_{km}^2, \quad
P_{r,km}= a_{km}^2(\sigma_s^2 p_{km}^2
g_{km}^2+\sigma_v^2) \eee can be
interpreted as the power allocated to
the $k$th frequency tone and the $m$th
spatial stream at the source and the
relay, respectively. By rewriting
${\Phi}_{km}$ in terms of the newly
introduced variables $P_{s,km}$ and
$P_{r,km}$ as \bee\label{eq37}
{\Phi}_{km}=\frac{P_{s,km}P_{r,km}
g_{km}^2
 h_{km}^2} {\sigma_v^2P_{r,km} h_{km}^2+\sigma_u^2
 (P_{s,km} g_{km}^2+\sigma_v^2)}+1,
 \eee
problem (\ref{prob0}) can be reformulated as the following power allocation problem
\bee
\min_{\{P_{s.km}, P_{r,km}\}} &&  f_{\rm X}\left(\mathbf\Phi\right)\nnb \\
     \mbox{s.t.}
    &&
    \sum_{k=0}^{N_c-1}\sum_{m=1}^{M}P_{s,km}\leq P_S, \nnb \\
    && \sum_{k=0}^{N_c-1}\sum_{m=1}^{M}P_{r,km}\leq
    P_R,
   \nnb \\ && P_{s,km}\geq 0,
    P_{r,km}\geq 0, \forall k,m,    \label{LE-OPA}
\eee
where the constraints $P_{s,km}\geq 0, P_{r,km}\geq 0, \forall k,m$, ensure that the allocated powers are not negative.
\subsection{Structure of the Optimal Precoding Matrices for FD-DFE}
For the FD-DFE receiver, we observe
from (\ref{E-dfe8}) that $\qE_{\rm
FD-DFE}$ is not an explicit function of
optimization variables $\qP_k$ and
$\qA_k$, which renders the optimization
a challenging task. In the following,
we will show that by using some proper
transformations, an upper bound for the
original objective function can be
derived, which is equivalent to one of
the objective functions considered for
the FD-LE receiver. However, first we will show that for FD-DFE, the three considered objective functions are equivalent.

\subsubsection{Equivalence of Objective Functions}
Since
$\qE_{\rm FD-DFE}$ in (\ref{E-dfe8}) is
a diagonal matrix, we invoke the
following matrix arithmetic-geometric
mean inequality
 \bee
 \label{key-in}\frac{1}{M} {\rm tr}(\qD)=\frac{1}{M} \sum_{i=1}^{M} [\qD]_{(i,i)}\geq
(\prod_{i=1}^{M}
[\qD]_{(i,i)})^{\frac{1}{M}}=\det(\qD)^{\frac{1}{M}}
, \eee where equality holds if and only
if (i.f.f.) all main diagonal elements
of $\qD$ are equal. The inequality
provides some important insights into
the objective function for FD-DFE.
First, it implies that the AMSE, i.e.,
${\rm tr}(\qD)$, is lower bounded by
the term involving the GMSE, i.e.,
$\det(\qD)$. Second, this lower bound
is achieved i.f.f.~the MSEs of all
streams are identical. Therefore,
making the diagonal entries of $\qD$
identical will enable us to minimize
the AMSE, GMSE, and maxMSE
simultaneously. Consequently, for
FD-DFE, the three considered objective
functions become equivalent  if this
condition is fulfilled, and the value
of the objective function is equal to
$\det(\qD)^{\frac{1}{M}}$. In the
sequel, we will show how this can be
achieved by properly choosing a
unitary matrix at the source precoder.
By noting that \bee {\rm det}(\qD)={\rm
det}(\qL\qD\qL^\dag)={\rm
det}(\qU_{11}^{-1}), \eee where we have
exploited the properties that ${\rm
det}(\qA\qB)={\rm det}(\qA){\rm
det}(\qB)$ and ${\rm det}(\qL)=1$
\cite{mat_inv}, we can obtain
\bee
\qU_{11}^{-1} &=&
\qL\qD^{1/2}(\qL\qD^{1/2})^\dag
=(\qQ\qR)^\dag\qQ\qR, \label{U11a}
\eee
where $\qQ$ is an arbitrary unitary
matrix of appropriate dimension and
$\qR=(\qL\qD^{1/2})^\dag$ is a lower
triangular matrix whose main diagonal
elements are equal to the square root
of the main diagonal elements of $\qD$.
Therefore, finding a matrix $\qD$ with
equal diagonal elements is equivalent
to finding a matrix $\qR$ with equal
diagonal elements. Since it follows
from (\ref{U11a}) that
$\qU_{11}^{-1/2}=\qQ\qR$, we have to
find a matrix decomposition of
$\qU_{11}^{-1/2}$ such that $\qR$ has
identical diagonal elements. Such a
decomposition is referred to as
equal-diagonal QR decomposition (E-QRD)
\cite{QRS} or geometric-mean
decomposition (GMD) \cite{GMD}. Note
that for an arbitrary matrix $\mathbf
\Xi$, the standard form of the GMD is,
\bee
\mathbf\Xi\qV_1^\dag=\qQ\qR,\label{GMD0}\eee
where $\qV_1$ is a unitary matrix
chosen to make the diagonal entries of
$\qR$ all equal.
 Therefore, to facilitate the application of GMD, we rewrite $\qU_{11}^{-1/2}$ as
\bee \qU_{11}^{-1/2}=\mathbf
\Xi\qV_1^\dag \label{Xi} \eee  In the
following, we will find the explicit
expression for $\mathbf \Xi$ to perform
the decomposition in (\ref{GMD0}). By
expressing $\qP_k$ as the product of
$\qV_1$ and a general matrix $\tilde
\qP_k$, \bee
\qP_k&=&\tilde\qP_k\qV_1^{\dag}, \eee
we can write $\mathbf \Psi_{k}$ as
$\mathbf \Psi_{k}=\qV_{1}\mathbf
{\hat\Psi}_{k}\qV_{1}^\dag$, where
\bee\label{eq44} \mathbf
{\hat\Psi}_{k}&=&\sigma_s^2
\tilde\qP_k^\dag\qG_k^\dag\qA_k^\dag\qH_k^\dag
\Big(\sigma_v^2\qH_k\qA_k\qA_k^\dag\qH_k^\dag+
\sigma_{u}^2\qI_{N_d}\Big)^{-1}\nnb
\\ &&\qH_k\qA_k\qG_k\tilde\qP_k+\qI_M. \eee
Note that $\mathbf {\hat\Psi}_{k}$ has
the same form of ${\mathbf \Psi}_{k}$
in (\ref{Psi}) but with $\qP_k$
replaced by $\tilde\qP_k$. Therefore,
matrix $\qZ$ in (\ref{ZZ}) can be
written as
\begin{align} \qZ&=\left[
      \begin{array}{cccc}
        \qV_1\bar\qz_{0}\qV_1^{\dag} & \qV_1\bar\qz_{1}\qV_1^{\dag} & \ldots & \qV_1\bar\qz_{N_{fb}}\qV_1^{\dag} \\
        \qV_1\bar\qz_{1}^\dag\qV_1^{\dag} & \qV_1\bar\qz_{0}\qV_1^{\dag} & \ldots & \qV_1\bar\qz_{N_{fb}-1} \\
        \vdots &  & \ddots & \vdots \\
        \qV_1\bar\qz_{N_{fb}}^\dag\qV_1^{\dag} & \qV_1\bar\qz_{N_{fb}-1}^\dag\qV_1^{\dag} & \ldots & \qV_1\bar\qz_{0}\qV_1^{\dag} \\
      \end{array}
    \right]\nnb \\&=(\qI_{N_{fb}}\otimes
\qV_{1})\bar\qZ (\qI_{N_{fb}}\otimes
\qV_{1}^\dag),\label{46}\end{align}
 where $\bar\qZ$ has the same form as $\qZ$ in (\ref{ZZ})
  with $\qz_n$ replaced by
  $\bar\qz_n={\sum_{k=0}^{N_c-1}\hat{\mathbf \Psi}_{k}^{-1}e^{j\frac{2\pi}{N_c}kn}}$.
By noting that \bee
\qZ^{-1}=(\qI_{N_{fb}}\otimes
\qV_{1})\bar\qZ^{-1}(\qI_{N_{fb}}\otimes
\qV_{1}^\dag), \eee where we have used
$(\qI_{N_{fb}}\otimes
\qV_{1})^{-1}=\qI_{N_{fb}}\otimes
\qV_{1}^{\dag}$, we obtain from (\ref{eq20})
\bee
\qU_{11}^{-1}=\qV_{1}\bar\qU_{11}^{-1}\qV_{1}^\dag,\label{U11}
\eee where $\bar\qU_{11}$ is the first
$M\times M$ submatrix of
$\bar\qZ^{-1}$. Comparing (\ref{U11})
with (\ref{Xi}), we observe that
$\bar\qU_{11}^{-1/2}$ is the explicit
form for $\mathbf \Xi$. Hence, for a
given $\bar\qU_{11}$, we can always
find a unitary matrix $\qV_1$ which
achieves the MSE lower bound.

As $\bar\qU_{11}$ is a function of
relay precoding matrix $\qA_k$ as well
as the remaining part of the source
precoding matrix, i.e., $\tilde\qP_k$,
in the following, we need to determine
these matrices. To this end, we write
$\qU_{11}=(\qZ_{11}-\qZ_{12}^\dag\qZ_{22}^{-1}\qZ_{12})^{-1}$
\cite{mat_inv}, which allows us to
express the objective function for
FD-DFE as \bee \label{50}
\mbox{OBJ}=\det(\qU_{11}^{-1})=\det\left(\qZ_{11}-\qZ_{12}^\dag\qZ_{22}^{-1}\qZ_{12}\right).
\label{dfe88} \eee

\subsubsection{Upper Bound on Objective Function}
Unfortunately, the expression for
$\mbox{OBJ}$ in (\ref{50}) depends on
the FBF length $N_{fb}$,
cf.~(\ref{46}), which is not desirable
in practice. Additionally, due to the
presence of $\qZ_{22}^{-1}$ in
(\ref{50}), it is also not
straightforward to express the
objective function in terms of $\qA_k$
and $\tilde \qP_k$. To avoid these
problems, we derive an upper bound for
$\mbox{OBJ}$, which is independent of
$N_{fb}$ and directly related to the
optimization variables.

Since matrix $\qZ$ in (\ref{ZZ}) is a
positive semidefinite (PSD) matrix,
$\qZ^{-1}$, $\qZ_{22}$, and $\qU_{11}$
are PSD matrices as well. Thus,
$\qU_{11}^{-1}$ and
$\qZ_{12}^\dag\qZ_{22}^{-1}\qZ_{12}$
are also PSD matrices. By exploiting
the fact that
$\det\left(\qA+\qB\right)\geq
\det\left(\qA\right)$ if $\qA$ and
$\qB$ are PSD matrices  \cite{mat_inv},
we obtain \bee \det(\qU_{11}^{-1})\leq
\det\left(\qZ_{11}\right)\label{inequ},
\eee where equality holds
i.f.f.~$N_{fb}=0$. Therefore, for the
case of $N_{fb}=0$,
$\det\left(\qZ_{11}\right)$ is the
exact value of $\mbox{OBJ}$. Otherwise,
it is an upper bound for $\mbox{OBJ}$,
which can be expressed as \begin{align}
\mbox{OBJ}_{ub}&={\rm det}(\qZ_{11})=
{\rm det}\left(\sum_{k=1}^{N_c} \mathbf
{\Psi}_{k}^{-1} \right) \nnb \\
&={\rm det}\left(\sum_{k=1}^{N_c}\qV_1
\mathbf {\hat\Psi}_{k}^{-1}
\qV_1^\dag\right)={\rm
det}\left(\sum_{k=1}^{N_c}\mathbf
{\hat\Psi}_{k}^{-1}
\right),\label{eq56} \end{align} where
we exploited $\det(\qV_1)=1$.

\subsubsection{Structures of Optimal Source and Relay Precoding Matrices}
Since we can always chose $\tilde
\qP_k$ such that
$\sum_{k=1}^{N_c}\mathbf
{\hat\Psi}_{k}^{-1}$ a diagonal matrix,
cf.~(\ref{eq44}), the determinant in
(\ref{eq56}) is essentially the product
of the diagonal entries of
$\sum_{k=1}^{N_c}\mathbf
{\hat\Psi}_{k}^{-1}$.  Consequently,
$\mbox{OBJ}_{ub}$ is equivalent to the
objective function for the FD-LE
receiver under the GMSE criterion.
Thus, from Theorem 1 we obtain the
following optimal structures for
$\tilde\qP_k$ and $\qA_k$ \bee
\tilde\qP_k^\star= \bar\qV_G^{(k)}
\mathbf\Lambda_{P}^{(k)},\quad
     \qA_k^\star= \bar\qV_H^{(k)}\mathbf \Lambda_{A}^{(k)} \bar\qU_{G}^{(k)\dag}.
\eee And the optimal $\qP_k^\star$ is
thus given by
$\tilde\qP_k^\star\qV_1^\dag$. The
remaining power allocation problem is
identical to that for the GMSE
criterion for FD-LE,
cf.~(\ref{LE-OPA}). It is worth
mentioning that for $N_{fb}>0$, the
upper bound $\mbox{OBJ}_{ub}$
constitutes a tight approximation of
the objective function $\mbox{OBJ}$ as
is illustrated in Section V.

\subsection{Optimal Power Allocation}
From the previous two subsections, it can be concluded that only two different types of power allocation problems have to be solved, namely the problems for the AMSE and GMSE criteria for FD-LE. The solutions
to these problems are also applicable for the maxMSE criterion for FD-LE and all three criteria for FD-DFE. However, since the objective functions for the AMSE and GMSE criteria in (\ref{LE-OPA}) are not
 jointly convex w.r.t.~the power allocation variables, the global optimal solution is difficult to obtain. Thus, in the following, we adopt a high signal-to-noise ratio (SNR) approximation for ${\Phi}_{km}$, i.e.,
 we assume $\sigma_u^2\sigma_v^2$ is sufficient small such that it can be ignored in the denominator of (\ref{eq37}), which leads to
\bee
{\Phi}_{km}\approx {\tilde\Phi}_{km}=\frac{P_{s,km}P_{r,km} g_{km}^2 h_{km}^2} {\sigma_v^2P_{r,km} h_{km}^2+\sigma_u^2 P_{s,km} g_{km}^2}+1.\label{high-phi}
 \eee
 This approximation renders the optimization problem convex such that efficient methods can be applied for its solution.  In particular, it can be shown that both objective functions $ f_{\rm AMSE}(\tilde{\mathbf\Phi})$
 and $ f_{\rm GMSE}(\tilde{\mathbf\Phi})$, $\tilde{\mathbf\Phi}=\{\tilde\Phi_{km}$ , $\forall k,m\}$, are jointly convex w.r.t.~the power allocation variables $P_{s,km}$ and $P_{r, km}$, respectively.
The convexity of the objective functions can be proved straightforwardly by showing that the Hessian matrix w.r.t.~$P_{s,km}$ and $P_{r, km}$ is positive semi-definite.
 We omit the proof here because of space constraints. Furthermore, all power constraints are affine in $P_{r,km}$ and $P_{s,km}$. Thus, the optimization problem in (\ref{LE-OPA}), with ${\Phi}_{km}$
 approximated by $\tilde{\Phi}_{km}$, is a convex optimization problem.

We are now ready to derive an iterative power allocation algorithm. To this end, we introduce the Lagrangian of the considered power allocation problem
\begin{align}
\mathcal{L} &= f_{\rm
X}({\tilde\Phi}_{km})
+\lambda[\sum_{k,m}P_{s,km}-P_S ]+\mu[
\sum_{k,m}P_{r,km}-P_R ]\nnb
\\ &-\sum_{k,m}[\beta_{km}P_{s,km}+\gamma_{km}P_{r,km}],\label{L1}
\end{align}
where $\lambda$ and $\mu$ are the
Lagrange multipliers for the sum power
constraints for source and relay,
respectively, and $\beta_{km}$ and
$\gamma_{km}$ are the Lagrange
multipliers for the individual power
constraints for source and relay,
respectively. Applying the
Karush-Kuhn-Tucker (KKT) conditions to
(\ref{L1}), which are sufficient and
necessary conditions for convex
optimization problems \cite{cvx-book},
we obtain the optimal solution of the
considered problem as \begin{align}
P^{\rm}_{s,km}&=\frac{\sigma_v^2P_{r,km}^{2}h_{km}^{2}}{g_{km}^2(P_{r,km}^{2}h_{km}^{2}+
\sigma_u^2)} \left(\sqrt{
\frac{g_{km}^2}
{\lambda (\ln2) B_m \sigma_u^2\sigma_v^2}}-1 \right)^+ \nnb \\
P_{r,km}&=\frac{\sigma_u^2P_{s,km}^{2}g_{km}^{2}}{h_{km}^2(P_{s,km}^{2}g_{km}^{2}+
\sigma_v^2)} \left(\sqrt{
\frac{h_{km}^2}{\mu (\ln2) B_m
\sigma_u^2\sigma_v^2 }}-1 \right)^+
\label{solu1}, \end{align} where
$B_m=1$ and
$B_m=\sum_{k=1}^{N_c}({\tilde\Phi}_{km}+1)^{-1}$
for the AMSE and the GMSE criteria,
respectively, and $[x]^+=\max(0,x)$.
The Lagrange multipliers $\lambda$ and
$\mu$, which are chosen to satisfy the
sum power constraint for source and
relay, respectively, can be found with
the following subgradient method
\cite{cvx-book}
\begin{align}
\lambda^{[n+1]}&=\left[\lambda^{[n]}-\varepsilon_1\left(\sum_{k=0}^{N_c-1}\sum_{m=1}^{M}P_{s,km}-P_S\right)\right]^{+} \label{lambda} \\
     \mu^{[n+1]}&=\left[\mu^{[n]}-\varepsilon_2\left(\sum_{k=0}^{N_c-1}  \sum_{m=1}^{M}P_{r,km}-P_R\right)\right]^{+}, \label{mu}
\end{align}
where $n$ is the iteration index, and
$\varepsilon_i, i=1,2$, are step sizes.
From (\ref{solu1}), we observe that the
optimal $P_{s,km}$ depends on
$P_{r,km}$ and vice versa. To tackle
this problem, we propose the algorithm
in Table I to iteratively find the
optimal power allocations. Convergence
of this algorithm to the optimal
solution is guaranteed because of the
convexity of the considered
optimization problem. Note that if
either $P_{s,km}$ or $P_{r,km}$ is
equal to $0$, the other variable will
also be $0$. This result is intuitively
pleasing since, if for example the
$(m,k)$th subchannel is shut down in
the $S$-$R$ link, there is no need to
waste power on this subchannel in the
$R$-$D$ link.
 It is also worth noting that for the GMSE criterion, $P_{s,km}$ and $P_{r,km}$ are functions of $\tilde \Phi_{km}$, which means the optimal $P_{s,km}$ and $P_{r,km}$ for the $k$th frequency tone depend
 on the power allocations in all other frequency tones. Therefore, finding the optimal solution requires a higher complexity for the GMSE criterion than for the AMSE criterion.

\begin{table}
\begin{center}
\begin{scriptsize}
\caption{Algorithm for finding the
optimal power allocation. $\epsilon_1$
and $\epsilon_2$ are small constants,
e.g.~$\epsilon_1=\epsilon_2=10^{-4}$.}
\label{exp-FDE}
\begin{tabular}{|l|l|l|}
  \hline
  1 & Initialize $\mu^{[1]}$ and $\lambda^{[1]}$\\ \hline
  2 & Initialize $P_{s,km}^{[1]}$,$P_{r,km}^{[1]}$, $\forall k,m$.\\
  & Set $P_{s,km}^{rec}=P_{s,km}^{[1]}$, $P_{r,km}^{rec}=P_{r,km}^{[1]}$,$\forall k,m$. \\\hline
  3
  &\textbf{Repeat} \\
  & \quad Set iteration number to $n=2$. \\
  &\quad\textbf{Repeat} \\
  & \quad\quad\textbf{for} $m=1:M$, $k=1:N_c$ \\
  & \quad\quad \quad Find $P_{s,km}^{[n]}$ from (\ref{solu1})
   using $P_{r,km}^{rec}$ and $\lambda^{[n-1]}$.\\
  & \quad\quad\textbf{end for}\\
  & \quad\quad Update $\lambda^{[n]}$ using (\ref{lambda}).\quad $n=n+1$.\\
  & \quad \textbf{until} $|\lambda^{[n+1]}-\lambda^{[n]}|<\epsilon_1$, \textbf{set} $P_{s,km}^{rec}=P_{s,km}^{[n]}$.\\
  & \quad Set iteration number to $l=2$. \\
  & \quad\textbf{Repeat} \\
  & \quad\quad\textbf{for} $m=1:M$, $k=1:N_c$ \\
  & \quad\quad\quad Find $P_{r,km}^{[l]}$ from (\ref{solu1}) using
   $P_{s,km}^{rec}$, and $\mu^{[l-1]}$.\\
  & \quad\quad\textbf{end for}\\
  & \quad\quad Update $\mu^{[l]}$ using (\ref{mu}).\quad $l=l+1$.\\
  & \quad\textbf{until} $|\mu^{[l+1]}-\mu^{[l]}|<\epsilon_2$, \textbf{set} $P_{r,km}^{rec}=P_{r,km}^{[n]}$.\\
  & \textbf{until} $P_{r,km}^{rec}$ and
  $P_{s,km}^{rec}$ converge.\\
 \hline
4 & $P_{s,km}^{rec}$ and
$P_{r,km}^{rec}$, $\forall k,m$, are
the optimal solution.
\\ \hline
\end{tabular}\\
\end{scriptsize}
\end{center}
\end{table}

\subsection{Suboptimal Power Allocation Schemes}
Since the proposed precoding matrix
optimization scheme involves an
iterative power allocation algorithm
and considerable information exchange
between source and relay, it is
desirable to investigate suboptimal
approaches with lower complexity and
reduced feedback overhead. One option
is to adopt equal power allocation at
the source and to optimize only the
power allocation at the relay. We refer
to the corresponding scheme as EPA-S.
EPA-S eliminates the need for
information exchange between source and
relay for power allocation, hence
guaranteeing faster convergence of the
power allocation algorithm. However,
the EPA-S scheme still requires CSI
feedback from the relay to the source
for computation of the source precoding
matrix. In order to completely avoid
CSI feedback, one can perform precoding
at the relay only, which we refer to as
ROP scheme. For FD-DFE, we also
introduce the UPS scheme, which applies
only the  unitary precoding matrix
$\qV_1$ at the source. This is
motivated by the result in Section
IV-B, where it is shown that this
unitary matrix can balance the MSEs of
the different spatial streams. Similar
to ROP, the UPS scheme has the
advantage of a reduced feedback
overhead compared to optimal power
allocation and the EPA-S scheme as the
source only needs to acquire knowledge
of the $M\times M$ unitary matrix
$\qV_0$.

\section{Simulation Results}
In this section, we evaluate the
performance of the proposed source and
relay precoding schemes using
simulations. We assume that each data
block contains $N_c=64$ symbols. The
channels are modelled as uncorrelated
Rayleigh block fading channels with
power delay profile
$p[n]=\frac{1}{\sigma_t}\sum_{l=0}^{L_x-1}
e^{-n/\sigma_t}\delta[n-l]$
\cite{Ch-book}, where $L_x\in
\{L_g,L_h\}$ and $\sigma_t=2$, which
corresponds to moderately
frequency-selective fading. For
convenience, we set the values of
$L_g$, $L_h$, $N_{g,s}$, and $N_{g,r}$
all equal to 16. Unless stated
otherwise, $N_{fb}$ is set to 15. We
assume identical noise variances for
both links, i.e.,
$\sigma_{u}^2=\sigma_{v}^2$, and define
the source and relay SNRs as
$(E_b/N_0)_s\triangleq\frac{P_S}{N_bN_s
N_c\sigma_{u}^2}$ and
$(E_b/N_0)_r\triangleq\frac{P_R}{N_b
N_s N_c\sigma_{v}^2}$, respectively,
where $N_b$ is the number of bits per
symbol. For all simulation results
shown, we set $(E_b/N_0)_s=16$ dB and
examine bit error rate (BER) and
capacity as functions of $(E_b/N_0)_r$.
All simulations are averaged over at
least 10,000 independent channel
realizations. In the following, the
proposed joint source and relay
precoding design is referred to as JSR,
and the notation $\{M,N_s,N_r,N_d\}$ is
used to specify a system with the
parameters appearing in the bracket.
\subsection{Convergence of the Algorithm and Tightness of $OBJ_{ub}$ for FD-DFE }
In Figs.~2 and 3, we show the value of
the objective function for the GMSE
criterion versus the numbers of inner
and outer iterations for different
values of $(E_b/N_0)_r$ for a
$\{2,2,2,2\}$ MIMO relay system
\footnote{Similar results also hold for
the AMSE criterion.}. We define an
outer iteration as one optimization of
$\{P_{s,km}\}$ or $\{P_{r,km}\}$ in the
algorithm shown in Table I, and the update
of $\{P_{s,km}\}$ ($\{P_{r,km}\}$) in
each outer iteration constitutes one
inner iteration.  The reference lines
indicate the optimal values of the
objective function. Figs.~2 and 3 show
that it takes at most ten inner
iterations to obtain an intermediate
solution for $\{P_{s,km}\}$ or
$\{P_{r,km}\}$, and at most three outer
iterations to obtain the final
solutions for $\{P_{s,km}\}$ and
$\{P_{r,km}\}$. Also, from Fig.~3, we
observe that the largest improvement of
the objective function value is
obtained in the first and second outer
iterations when $(E_b/N_0)_r$ is small
and large, respectively. This
observation suggests that for low SNR,
optimizing the source or the relay
power is sufficient to realize most of
the achievable performance gain, while
for high SNR, a joint optimization of
the source and relay powers is
beneficial.

\begin{figure}[tp]
    \centering
    \includegraphics[width=3.5in]{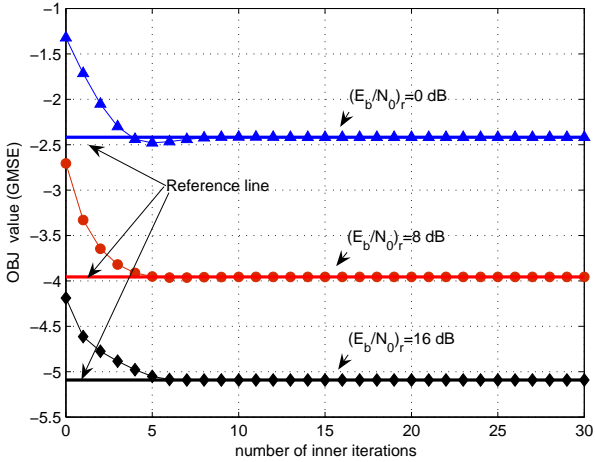}
    \caption{Objective function value (GMSE criterion) versus number of inner iterations.}
{\label {}}
\end{figure}

\begin{figure}[tp]
    \centering
    \includegraphics[width=3.5in]{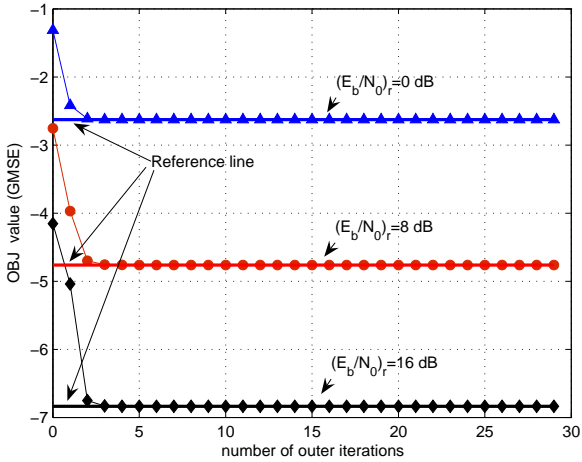}
    \caption{Objective function value (GMSE criterion) versus number of outer iterations.}
{\label {}}
\end{figure}

In Fig.~4, we show the values of the objective function, ${\rm OBJ}$, for FD-DFE, cf.~(\ref{dfe88}), for different values of $N_{fb}$. Note that ${\rm OBJ}$ for $N_{fb}=0$ serves as the upper bound, ${\rm OBJ}_{ub}$, for
the general objective function. From the figure, we observe that the upper bound is very close to the objective function for all considered values of $N_{fb}$, especially for medium to high SNR and for the
$\{2,3,3,3\}$ system. Therefore, ${\rm OBJ}_{ub}$ constitutes a good approximation for the objective function for the FD-DFE receiver.

\begin{figure}[tp]
    \centering
    \includegraphics[width=3.5in]{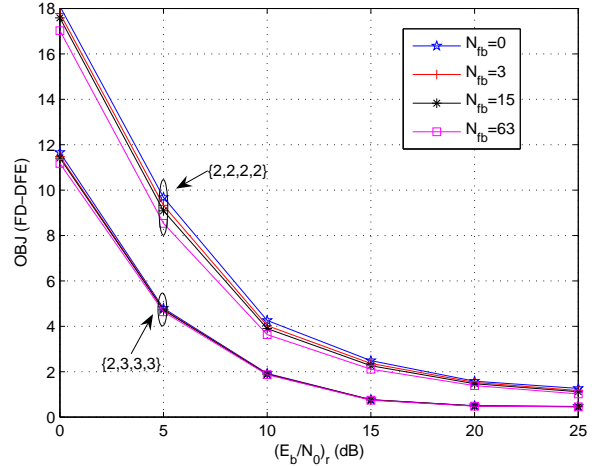}
    \caption{Objective function value of the FD-DFE receiver for different values of $N_{fb}$.}
{\label {}}
\end{figure}

\subsection{Comparison of SC-FDE and OFDM for JSR Precoding}
In Fig.~5, we show the BER of uncoded
quaternary phase-shift keying (QPSK) as
a function of $(E_b/N_0)_r$ for the
proposed FD-LE based MIMO relay system
for the three considered precoding
matrix optimization criteria. For FD-DFE, only the
GMSE criterion is considered as for FD-DFE all three
criteria are equivalent. For
comparison, the performance of a
MIMO-OFDM relay system optimized under
the same criteria is also included
\cite{Rong}. The figure shows that for
the $\{2,2,2,2\}$ system, the proposed
MIMO relay system with an FD-LE
receiver outperforms the corresponding
OFDM-based system by a large margin
since, in contrast to uncoded OFDM,
FD-LE is able to exploit the frequency
diversity offered by the channel. In
addition, for both FD-LE and OFDM, the
system employing the maxMSE criterion
offers the best performance since the
worst-case MSE is minimized. For
FD-DFE, the performance improvement
compared to FD-LE and OFDM is
remarkable and a much higher diversity
gain is observed.  On the other hand,
for the $\{2,3,3,3\}$ system, we
observe that the performance gaps
between FD-DFE, FD-LE, and OFDM become
smaller. Surprisingly, using the maxMSE
criterion, the optimized OFDM and FD-LE
systems achieve a performance very
close to that of FD-DFE. This is due to
fact that the additional antennas offer
additional spatial diversity which
helps OFDM and FD-LE to effectively
avoid the deep spectrum nulls that
otherwise negatively affect their
performance in frequency-selective
fading.

\begin{figure}[tp]
    \centering
    \includegraphics[width=3.5in]{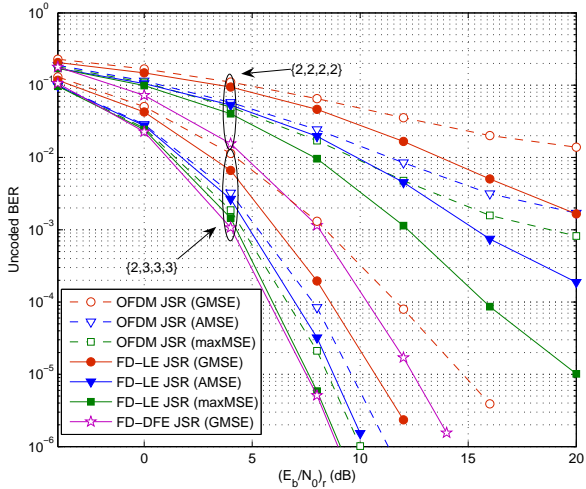}
    \caption{Uncoded BER of $\{2,2,2,2\}$ and $\{2,3,3,3\}$ MIMO relay systems for JSR precoding
using different optimization criteria.}
{\label {}}
\end{figure}

In Fig.~6, we investigate the capacity
of the OFDM and SC-FDE systems with
different optimization criteria.
As expected, the systems
optimized under the GMSE criterion have
the best performance since minimizing
the GMSE is equivalent to maximizing
the capacity. In general, the capacity
achieved by the considered MIMO-OFDM
relay systems is higher than that of
the corresponding FD-LE relay systems,
except for the case when both systems
are optimized based on the maxMSE
criterion. Indeed, the OFDM system
optimized under the maxMSE criterion
suffers from the worst capacity
performance among all the considered
schemes since the available power is
mainly used to improve the MSE of the
subcarriers with bad channel conditions
instead of taking advantage of the
subcarriers with good channel
conditions. In addition, Fig.~6 shows
that for FD-LE, the AMSE and maxMSE
criteria lead to exactly the same
capacity, which implies that the
unitary rotation of the source
precoding matrix does not influence the
capacity of the system. Furthermore, the
capacity achieved with FD-DFE is larger than
that achieved with any of the FD-LE schemes
and very close to that of OFDM. This is due
to the lower stream MSEs of FD-DFE compared
to FD-LE, which translates into larger stream SINRs
and larger system capacity. For the
$\{2,3,3,3\}$ system, we observe that
FD-LE and OFDM achieve almost the same
performance for the AMSE and GMSE
criteria, implying that with more
source/relay/destination antennas,
FD-LE will approach the achievable
capacity of the OFDM system.
\begin{figure}[tp]
    \centering
    \includegraphics[width=3.5in]{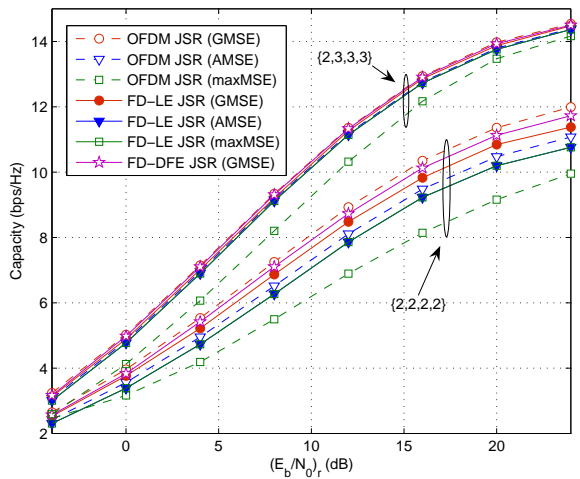}
    \caption{Capacity of $\{2,2,2,2\}$ and $\{2,3,3,3\}$ MIMO relay systems with JSR relay precoding for different optimization criteria.}
{\label {}}
\end{figure}

\subsection{Performance of Suboptimal Power Allocation Schemes}
In Figs.~7 and 8, we plot the uncoded
and coded BERs for the suboptimal power
allocation schemes discussed in Section
IV-D for a \{2,2,2,2\} system,
respectively. For the coded case, the
standard rate-1/2 convolution code with
generator matrix $(133,171)_{oct}$ is
adopted. The OFDM and FD-LE systems are
both optimized under the maxMSE
criterion. The FD-DFE system is
optimized under the GMSE criterion
since for FD-DFE all three considered
criteria are equivalent to the GMSE
criterion.
 From Fig.~7 we observe that for uncoded transmission,
 the FD-LE system outperforms the OFDM system if both employ the same precoding technique.
Fig.~7 also shows that for FD-LE and OFDM, EPA-S and ROP suffer from a considerable
 performance degradation compared to JSR, while for FD-DFE,
 the performance loss is relatively small for UPS and
 almost negligible for
EPA-S. By coding across subcarriers, OFDM-based relay
 systems can also exploit the frequency diversity of the
  channel and significantly improve their BER performance,
   cf.~Fig.~8. Nevertheless, the coded FD-LE system still
outperforms the OFDM system if the same
precoding technique
 is assumed in both cases. Also, Fig.~8 reveals that channel
 coding significanlty reduces the performance loss caused by
 suboptimal precoding techniques for
both OFDM and FD-LE.

\begin{figure}[tp]
    \centering
    \includegraphics[width=3.5in]{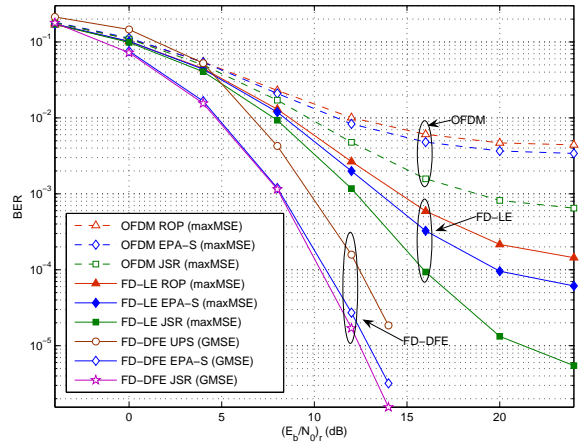}
    \caption{Uncoded BER of a $\{2,2,2,2\}$ MIMO relay system with JSR and suboptimal precoding schemes.} {\label {}}
\end{figure}

\begin{figure}[htp]
    \centering
    \includegraphics[width=3.5in]{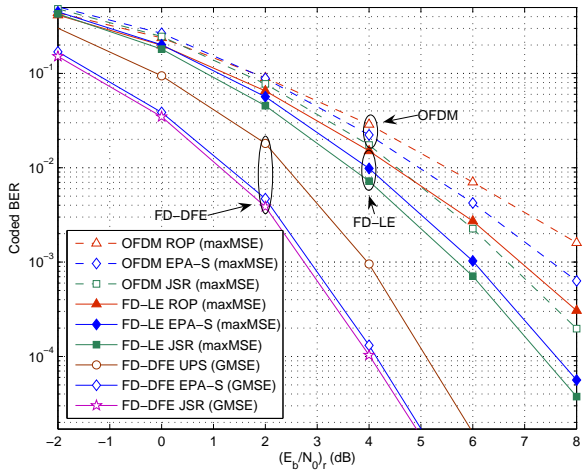}
    \caption{Coded BER of a $\{2,2,2,2\}$ MIMO relay system with JSR and suboptimal precoding schemes.} {\label {}}
\end{figure}

Since the performance of FD-DFE depends
on the the number of FBF taps, in
Fig.~9, we investigate the influence of
$N_{fb}$ on the performance of a
\{2,2,2,2\} system. The results show
that while the value of $N_{fb}$ has
limited impact on the performance of
EPA-S, it does play a critical role for
UPS. The reason is that for EPA-S, the
equivalent $S$-$R$-$D$ channel is
diagonalized into $M$ parallel
channels, thus eliminating the
inter-stream interference at the
receiver. However, for the case of UPS
, the equivalent end-to-end channel is
not fully diagonalized and the received
symbols experience inter-stream
interference. Consequently, a FBF with
sufficiently large $N_{fb}$ is required
to cancel out this interference. As can
be inferred from Fig.~9, there is a
complexity tradeoff between the
transmitter and the receiver for
FD-DFE. For EPA-S, since a small number
of FBF taps (e.g., $N_{fb}=3$) is
sufficient to achieve good performance,
the receiver complexity is similar to
that of FD-LE. However, comparatively
complex FD signal processing has to be
carried out at the transmitter. This
characteristic makes EPA-S suitable for
the downlink transmission. For the UPS
scheme, on the other hand, the transmit
processing is very simple since the
single tap precoding matrix $\qV_1$ can
be directly implemented in the TD. In
addition, the feedback overhead is low
as $\qV_1$ is identical for all
frequency tones. However, UPS requires
a longer and thus more complex FBF to
achieve a high performance. These
characteristics make UPS a very
promising scheme for uplink
transmission.
\begin{figure}[htp]
    \centering
    \includegraphics[width=3.5in]{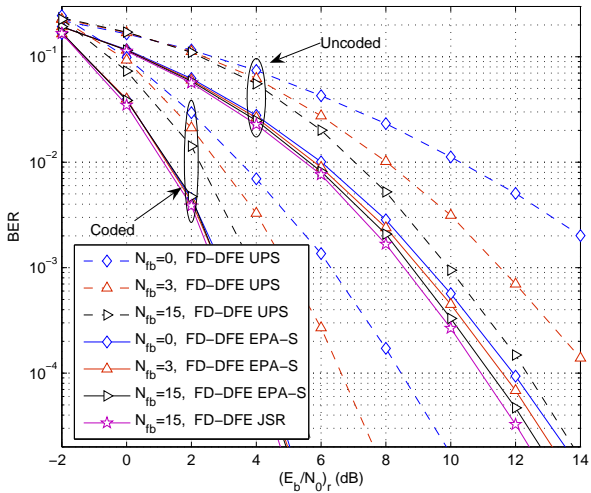}
    \caption{Uncoded and coded BER of a $\{2,2,2,2\}$ MIMO relay system with FD-DFE using JSR and suboptimal precoding schemes with different numbers of FBF taps.} {\label {}}
\end{figure}

\section{Conclusion}
In this paper, we have tackled the problem of transceiver design for MIMO relay systems
 employing SC-FDE. The optimal MMSE FD-LE and FD-DFE filters
at the destination were derived, and
the optimal structures of the source
and relay precoding matrices were
obtained in closed form for a general
family of objective functions. For
systems employing an FD-DFE receiver, we first showed
that the considered objective functions are all equivalent and
derived an upper bound on the original objective functions, which
was shown to be equal to the GMSE objective function for the
FD-LE receiver. The remaining power
allocation problem was solved globally
by using a high SNR approximation of
the objective function and efficient
convex optimization methods. Our
results show that the proposed SC-FDE
relaying schemes outperform the
corresponding OFDM schemes in terms of
both coded and uncoded BER. In
addition, the performance gap between
SC-FDE and OFDM relay systems decreases
when the number of
source/relay/destination antennas is
larger than the number of data streams.
Furthermore, we have shown that the
proposed suboptimal power allocation
schemes can reduce the system
complexity and feedback overhead at the
expense of a moderate performance
degradation, especially in case of
coded transmission, making them
promising candidates for practical
relay systems.

\section*{Appendix}

We first provide some relevant
definitions and lemmas that will be
used in the proof.

\textit{Definition 1 [15,1.A.1]:} Given
two $N \times 1$ real vectors $\qx,\qy
\in \mathbb{R}^{N}$. Let
$x_{[1]},\cdots,x_{[N]}$ and
$y_{[1]},\cdots,y_{[N]}$  denote the
components of $\qx$ and $\qy$ in
decreasing order. Then, $\qx$ is
majorized by $\qy$, or $\qx \prec \qy$,
if $\sum_{i=1}^{k}x_{[i]}\leq
\sum_{i=1}^{k}y_{[i]}$ for $k<N$ and
$\sum_{i=1}^{N}x_{[i]}=
\sum_{i=1}^{N}y_{[i]}$. Vector $\qx$ is
weakly majorized by $\qy$, or $\qx
\prec_w \qy$,  if
$\sum_{i=1}^{k}x_{[i]}\leq
\sum_{i=1}^{k}y_{[i]}, \forall k$.

\textit{Definition 2 [15,3.A.1]:} A
real function $f$ is Schur-convex if
for $\qx \prec \qy$, we have
$f(\qx)\leq f(\qy)$. Similarly, $f$ is
Schur-concave if for $\qx \prec \qy$,
we have $f(\qx)\geq f(\qy)$.

\textit{Lemma 1 [15,9.B.1]:} For a
Hermitian matrix $\qA$ with
$\mbox{diag}[\qA]$ and $\lambda(\qA)$
denoting vectors containing the main
diagonal elements and the eigenvalues
of $\qA$ arranged in decreasing order,
respectively, we have
$\mbox{diag}[\qA]\prec\lambda(\qA)$.

\textit{Lemma 2 [15,9.H.2]:} For $M$
matrices $\qA_{i}\in \mathbb{C}^{N
\times N},i=1,\cdots,M$, let
$\qB=\qA_1\qA_2\cdots\qA_M$. Then,
$\mathbf\sigma(\qB)\prec_w
\mathbf\sigma(\qA_1)\odot\mathbf\sigma(\qA_2)\odot\cdots
\mathbf\sigma(\qA_M)$, where
$\mathbf\sigma(\qX)$ denotes the vector
containing the singular values of
matrix $\qX$ arranged in decreasing
order and $\odot$ denotes the
element-wise product of two vectors.

\textit{Lemma 3 [15,3.A.8]:} A real
function $f$ satisfies $\qx \prec_w \qy
\Rightarrow f(\qx)\leq f(\qy)$ if and
only if $f$ is Schur-convex and
increasing.

\textit{Lemma 4 [15,9.H.1]:} For two
Hermitian positive semidefinite
matrices $\{\qA,\qB\} \in \mathbb{C}^{N
\times N}$ with eigenvalues
$\lambda_{\qA,i},\lambda_{\qB,i}$,
arranged in the same order, we have
${\rm tr}(\qA\qB)\geq
\sum_{i=1}^{N}\lambda_{\qA,i}\lambda_{\qB,N-i+1}
$.

\textit{Lemma 5 [15, p.7]:} For a
vector $\qx \in \mathbb{C}^{N \times
1}$, we have
$\sum_{i=1}^{N}(x_i/N)\mathbf 1\prec
\qx$, where $\mathbf 1$ is the all-ones
vector.

\textit{Lemma 6:} For Hermitian
matrices $\qA_k$ and $\hat \qA_k$,
$k=1,2,\cdots, N_c$ , if
$\mbox{diag}[\qA_k]\prec
\mbox{diag}[\hat\qA_k]$, we have
$\mbox{diag}[\sum_{k=1}^{N_c}\qA_k]\prec\mbox{diag}[\sum_{k=1}^{N_c}\hat\qA_k]$.

\textit{Proof:} According to Definition
1, if $\mbox{diag}[\qA_k]\prec
\mbox{diag}[\hat\qA_k]$, we have
$\sum_{i=1}^{j}a_{[ki]}\leq
\sum_{i=1}^{j}\hat a_{[ki]}$ for $j<N$
and $\sum_{i=1}^{N}a_{[ki]}=
\sum_{i=1}^{N}\hat a_{[ki]}$ for $j=N$,
where $a_{[ki]}$ and $\hat a_{[ki]}$
are the $i$th diagonal entries of
$\qA_k$ and $\hat \qA_k$, respectively.
Since $a_{[ki]}$ and $\hat a_{[ki]},
\forall k,i$, are non-negative, we have
$\sum_{k=1}^{N_c}\sum_{i=1}^{j}a_{[ki]}\leq
\sum_{k=1}^{N_c}\sum_{i=1}^{j}\hat
a_{[ki]}$ for $j<N$ and
$\sum_{k=1}^{N_c}\sum_{i=1}^{N}a_{[ki]}=
\sum_{k=1}^{N_c}\sum_{i=1}^{N}\hat
a_{[ki]}$ for $j=N$. Therefore, we have
$\sum_{k=1}^{N_c}\mbox{diag}[\qA_k]$$\prec
\sum_{k=1}^{N_c}\mbox{diag}[\hat\qA_k]$,
which is equivalent to
$\mbox{diag}[\sum_{k=1}^{N_c}\qA_k]\prec\mbox{diag}[\sum_{k=1}^{N_c}\hat\qA_k]$.

\textit{Lemma 7 [15, 9.B.2]:} For a
diagonal matrix $\qD \in \mathbb{C}^{M
\times M}$, there is a unitary matrix
$\qU$ such that $\qA=\qU^\dag \qD\qU$
has identical diagonal entries equal to
${\rm tr}(\qD)/M$.

We now set out to prove the optimal
structure of the source and relay
precoding matrices when
$f(\mbox{diag}[\hat\qE])$ is a
Schur-concave increasing function
w.r.t. $\mbox{diag}[\hat\qE]$. Let us
begin with the core term in the
expression for $\hat \qE$ in
(\ref{Psi}), which is given by
\begin{align} \mathbf \Psi_k^{-1}=&
 \qI_M-\nnb \\
 &\underbrace{\sigma_s^2\qQ_k^\dag(\sigma_s^2\qQ_k\qQ_k^\dag+\sigma_v^2\qH_k\qA_k\qA_k^\dag\qH_k^\dag+
 \sigma_{u}^2\qI_{N_d})^{-1}\qQ_k}_{\mathbf \Upsilon_k},
 \end{align}
 where we employed the  matrix inversion lemma. The CV matrix can now be expressed as
 \bee
\hat\qE=\frac{\sigma_s^2}{N_c}\sum_{k=1}^{N_c}\mathbf
\Psi_k^{-1}=\frac{\sigma_s^2}{N_c}(N_c\qI_M-\sum_{k=1}^{N_c}\mathbf
\Upsilon_k).
    \eee
Since
$f(\mbox{diag}[\sigma_s^2\qI_M-\frac{\sigma_s^2}{N_c}\sum_{k=1}^{N_c}\mathbf
\Upsilon_k])$ is a Schur-concave
decreasing function
w.r.t.~$\mbox{diag}[\sum_{k=1}^{N_c}\mathbf
\Upsilon_k]$,
$-f(\mbox{diag}[\sigma_s^2\qI_M-\frac{\sigma_s^2}{N_c}\sum_{k=1}^{N_c}\mathbf
\Upsilon_k])$ is a Schur-convex
increasing function
w.r.t.~$\mbox{diag}[\sum_{k=1}^{N_c}\mathbf
\Upsilon_k]$. $\mathbf \Upsilon_k $ can
be further expressed as \bee
\label{64}\mathbf \Upsilon_k
&=&\qX_k^\dag\qY_k^\dag(\qJ_k
\qJ_k^{\dag}+\sigma_u^2\qI_{N_d})^{-1}\qY_k
\qX_k, \eee where we have used the
definitions \bee
\qX_k&=&\sigma_s\qG_k\qP_k,\quad
\qY_k=\qH_k\qA_k, \nnb \\
\qJ_k&=&\qY_k(\qX_k\qX_k^{\dag}+\sigma_v^2\qI_{N_r})^{1/2}.
\eee By rewriting $\qY_k$ in terms of
$\qX_k$ and $\qJ_k$, we obtain
$\qY_k=\qJ_k(\qX_k\qX_k^{\dag}+\sigma_v^2\qI_{N_r})^{-1/2}$.
Plugging this result into (\ref{64})
leads to
 \begin{align}\label{eq66}
\mathbf \Upsilon_k
=&\qX_k(\qX_k\qX_k^{\dag}+\sigma_v^2\qI_{N_r})^{-1/2}\qJ_k^\dag
(\qJ_k
\qJ_k^{\dag}+\sigma_u^2\qI_{N_d})^{-1}
\nnb
\\ &\qJ_k(\qX_k\qX_k^{\dag}+\sigma_v^2\qI_{N_r})^{-1/2}\qX_k.
\end{align} Now, we define the following SVDs
\begin{align} &\qX_k=\qU_X^{(k)}\mathbf
\Lambda_X^{(k)} \qV_X^{(k)},\quad
     \qJ_k=\qU_J^{(k)}\mathbf \Lambda_J^{(k)} \qV_J^{(k)},\nnb \\
     &(\qX_k\qX_k^{\dag}+\sigma_v^2\qI_{N_r})^{-1/2}= \qU_X^{(k)}(\mathbf \Lambda_X^{(k)2}
     +\sigma_v^2\qI_{M})^{-1/2}
     \mathbf \Omega^{(k)},
     \label{pwconst}
\end{align} where $\qU_X^{(k)}\in
\mathbb{C}^{N_r\times M}$,
$\qU_J^{(k)}\in \mathbb{C}^{N_d\times
M}$, $\{\qV_J^{(k)},\mathbf
\Omega^{(k)}\}\in \mathbb{C}^{M\times
N_r}$, $\{\mathbf \Lambda_X^{(k)},
\qV_X^{(k)} , \mathbf \Lambda_J^{(k)}
\} \in \mathbb{C}^{M\times M}$.The
diagonal entries of $\mathbf
\Lambda_X^{(k)}$ and $ \mathbf
\Lambda_J^{(k)}$ are both sorted in
increasing order. We can use
(\ref{pwconst}) to rewrite (\ref{eq66})
as \begin{align} \mathbf \Upsilon_k =&
\qV_X^{(k)\dag}\mathbf \Lambda_X^{(k)}
{\qQ_{2}^{(k)\dag}}(\mathbf
\Lambda_X^{(k)2}+\sigma_v^2\qI_M)^{-1/2}{{\qQ_{1}^{(k)\dag}}}
 \nnb  \\
&\times(\qI_M+\sigma_u^2\mathbf\Lambda_J^{(k)-2})^{-1}
{{\qQ_{1}^{(k)}}} (\mathbf
\Lambda_X^{(k)2}+\sigma_v^2\qI_M)^{-1/2}\nnb
\\ &\times{{\qQ_{2}^{(k)}}} \mathbf
\Lambda_X^{(k)} \qV_X^{(k)},\label{Vx}
\end{align} where $\qQ_1^{(k)}=\mathbf
\qV_{J}^{(k)}\qU_X^{(k)}$ and
$\qQ_2^{(k)}=\mathbf
\Omega^{(k)}\qU_X^{(k)}$. By applying
Lemmas 1 and 2, we obtain \begin{align}
 \mbox{diag}[\mathbf \Upsilon_k]&\prec
\mathbf\lambda(\mathbf \Upsilon_k)\nnb
\\ & \prec_{w}
\mbox{diag}[\underbrace{(\qI_M+\sigma_v^2\mathbf
\Lambda_X^{(k)-2})^{-1}}_{\qD_{1}^{(k)}}
\underbrace{(\qI_M+\sigma_u^2\mathbf
\Lambda_J^{(k)-2})^{-1}}_{\qD_{2}^{(k)}}].
\end{align} Therefore, $\mbox{diag}[\mathbf
\Upsilon_k]$ is majorized by
$\mbox{diag}[\qD_{1}^{(k)}\qD_{2}^{(k)}]$
when $\qV_{X}^{(k)}= \mathbf \Xi_1$,
$\qQ_2^{(k)}=\mathbf \Xi_2$,
$\qQ_1^{(k)}=\mathbf \Xi_3$, where
$\mathbf \Xi_i \in \mathbb{C}^{M\times
M}, \forall i$, are arbitrary diagonal
matrices with unit norm diagonal
elements. Without loss of generality,
we can choose $\mathbf \Xi_i=\qI_M,
\forall i$. Hence, we have \bee
\qV_{X}^{(k)}&=& \qI_M,\quad\mathbf
\Omega^{(k)}=\qU_X^{(k)\dag}, \quad
\mathbf \qV_{J}^{(k)}=\qU_X^{(k)\dag}.
\label{eq82} \eee According to Lemma 6,
 \bee \label{73}
\mbox{diag}[\sum_{k=1}^{N_c}\mathbf\Upsilon_k]\prec_{w}\mbox{diag}[\sum_{k=1}^{N_c}(\qD_{1}^{(k)}\qD_{2}^{(k)})]
\eee holds. From the fact that
$-f(\mbox{diag}[\sigma_s^2\qI_M-\frac{\sigma_s^2}{N_c}\sum_{k=1}^{N_c}\mathbf
\Upsilon_k])$ is a Schur-convex
increasing function
w.r.t.~$\mbox{diag}[\sum_{k=1}^{N_c}\mathbf
\Upsilon_k]$, (\ref{73}), and
Definition 2, we deduce that
$-f(\mbox{diag}[\sigma_s^2\qI_M-\frac{\sigma_s^2}{N_c}\sum_{k=1}^{N_c}\mathbf
\Upsilon_k])\leq
-f(\mbox{diag}[\sigma_s^2\qI_M-\frac{\sigma_s^2}{N_c}\sum_{k=1}^{N_c}\qD_{1}^{(k)}\qD_{2}^{(k)}])$,
which is equivalent to \begin{align}&
f(\mbox{diag}[\sigma_s^2\qI_M-\frac{\sigma_s^2}{N_c}\sum_{k=1}^{N_c}\mathbf
\Upsilon_k])\nnb \\ &\geq
f(\mbox{diag}[\sigma_s^2\qI_M-\frac{\sigma_s^2}{N_c}\sum_{k=1}^{N_c}\qD_{1}^{(k)}\qD_{2}^{(k)}]).
\label{74} \end{align} Therefore, with
the help of the matrices in
(\ref{eq82}), the value of the
objective function can be reduced to
$f(\mbox{diag}[\sigma_s^2\qI_M-\frac{\sigma_s^2}{N_c}\sum_{k=1}^{N_c}\qD_{1}^{(k)}\qD_{2}^{(k)}])$.

In the following, we derive the
structure of the optimal source and
relay precoding matrices minimizing the
power consumption of the source and the
relay. For simplicity of notation, we
only consider the case when
$N_s=N_r=N_d$. The proof can be easily
extended to the case where  $N_s$,
$N_r$, and $N_d$ have different values.
From (\ref{pwconst}), we have
$\sigma_s\qU_{G}^{(k)}
\mathbf\Lambda_{G}^{(k)}\qV_{G}^{(k)\dag}\qP_k=\qU_X^{(k)}
\mathbf \Lambda_X^{(k)} \qV_X^{(k)}$,
which can be used to express the source
power consumption at frequency tone $k$
as \begin{align} &\sigma_s^2{\rm
tr}(\qP_k\qP_k^\dag)\nnb \\
&={\rm
tr}(\mathbf\Lambda_{G}^{(k)-1}\underbrace{\qU_{G}^{(k)\dag}
\qU_X^{(k)}}_{\qQ_3^{(k)\dag}} \mathbf
\Lambda_X^{(k)2}\underbrace{\qU_X^{(k)\dag}\qU_{G}^{(k)}}_{\qQ_3^{(k)}}
 \mathbf\Lambda_{G}^{(k)-1})\nnb \\
 &\geq {\rm tr}( \mathbf{\bar\Lambda}_G^{(k)-2} \mathbf
\Lambda_X^{(k)2}), \label{eq84}
\end{align} where the inequality
follows from Lemma 4, and the diagonal
matrix $\mathbf
{\bar\Lambda}_{G}^{(k)}\in
\mathbb{C}^{M \times M}$ contains the
$M$ largest singular values of
${\qG}_k$. Therefore, in order to
minimize the source power, we need to
choose $\qQ_3^{(k)}=[\qI_M\;\; \mathbf
0_{M\times(N_s-M) }]^T$, i.e.,
$\qU_X^{(k)}={\bar\qU}_{G}^{(k)}$,
where ${\bar\qU}_{G}^{(k)}$ contains
the $M$ right-most columns of
$\qU_{G}^{(k)}$. Recalling from
(\ref{eq84}) that $\qV_X^{(k)}=\qI_M$,
the source matrix can be expressed as
 \bee
\qP_k=\frac{1}{\sigma_s}\bar\qV_{G}^{(k)}\mathbf{\bar\Lambda}_{G}^{(k)-1}
\mathbf\Lambda_X^{(k)}=\bar\qV_{G}^{(k)}\mathbf
\Lambda_{P}^{(k)}, \label{PP} \eee
where $\bar\qV_{G}^{(k)}$ contains the
$M$ right-most columns of
$\qV_{G}^{(k)}$  and $\mathbf
\Lambda_{P}^{(k)}\triangleq
\frac{1}{\sigma_s}
\mathbf{\bar\Lambda}_{G}^{(k)-1}
\mathbf\Lambda_X^{(k)}$. Next, from
(\ref{pwconst}) we obtain
$\qU_{H}^{(k)}\mathbf\Lambda_{H}^{(k)}\qV_{H}^{(k)
\dag}\qA_k=\qJ_k(\qX_k\qX_k^\dag+\sigma_v^2\qI_{N_d})^{-1/2}$.
Using this result, we can express the
relay power consumption at frequency
tone $k$ as
\begin{align}
&{\rm
tr}(\qA_k(\qX_k\qX_k^\dag+\sigma_v^2\qI_{N_r})\qA_k^\dag)
\nnb \\
&={\rm
tr}(\mathbf\Lambda_{H}^{(k)-1}\qU_{H}^{(k)\dag}
\qJ_k\qJ_k^\dag
\qU_{H}^{(k)}\mathbf\Lambda_{H}^{(k)-1})\nnb \\
&={\rm
tr}(\mathbf\Lambda_{H}^{(k)-1}\underbrace{\qU_{H}^{(k)\dag}
\qU_{J}^{(k)}}_{\qQ_{4}^{(k)\dag}}\mathbf\Lambda_{J}^{(k)2}\underbrace{\qU_{J}^{(k)\dag}
\qU_{H}^{(k)}}_{\qQ_{4}^{(k)}}\mathbf\Lambda_{H}^{(k)-1})\nnb
\\ &\geq {\rm tr}(\bar{\mathbf\Lambda}_{H}^{(k)-2}\mathbf\Lambda_J^{(k)2}),
\label{inq1}
\end{align}
where the inequality follows from Lemma
4, and the diagonal matrix $\mathbf
{\bar\Lambda}_{H}^{(k)}\in
\mathbb{C}^{M \times M}$ contains the
$M$ largest singular values of
${\qH_k}$. In (\ref{inq1}), equality
holds for $\qQ_{4}^{(k)}=[\qI_M$
\;\;$\mathbf 0_{M\times(N_r-M) }]^T$,
i.e., $\qU_J^{(k)}=\bar\qU_{H}^{(k)}$,
where $\bar\qU_{H}^{(k)}$ contains the
$M$ right-most columns of
$\qU_{H}^{(k)}$. Recalling from
(\ref{eq82}) that
${\qU_X^{(k)}}^\dag=\mathbf
\Omega^{(k)}=\qV_{J}^{(k)}$, and from
(\ref{eq84}) that
$\qU_X^{(k)}=\bar\qU_{G}^{(k)}$, we
obtain for $\qA_k$ the expression
 \bee
\qA_k&=&\bar\qV_{H}^{(k)}\mathbf
{\bar\Lambda}_H^{(k)-1} \mathbf
\Lambda_J^{(k)}(\mathbf
\Lambda_X^{(k)2}+\sigma_v^2\qI_M)^{-1/2}\bar\qU_{G}^{(k)\dag}\nnb
\\ &=&\bar\qV_{H}^{(k)}\mathbf
\Lambda_{A}^{(k)}\bar\qU_{G}^{(k)\dag},
\label{AA} \eee where
$\bar\qV_{H}^{(k)}$ and
$\bar\qU_{G}^{(k)}$ contain the $M$
right-most columns of $\qV_{H}^{(k)}$
and $\qU_{G}^{(k)}$, respectively, and
$\mathbf \Lambda_{A}^{(k)}\triangleq
\mathbf {\bar\Lambda}_H^{(k)-1} \mathbf
\Lambda_J^{(k)}(\mathbf\Lambda_X^{(k)2}+\sigma_v^2\qI_M)^{-1/2}$.
Hence, we have proved that the
expressions for the source and relay
precoding matrices given in (\ref{PP})
and (\ref{AA}) minimize the objective
function $f(\mbox{diag}[\hat\qE])$,
cf.~(\ref{74}), as well as the power
consumption at the source and the
relay, cf.~(\ref{eq84}) and
(\ref{inq1}).

Now, we turn our attention to the case
when $f(\mbox{diag}[\hat\qE])$ is a
Schur-convex increasing function
w.r.t.~$\mbox{diag}[\hat\qE]$.  From
Lemma 5 we know that $ \frac{1}{M}{\rm
tr}(\hat\qE)\mathbf 1 \prec
\mbox{diag}[\hat\qE]$. Combining this
fact and Definition 2, we obtain the
inequality \bee
f(\mbox{diag}[\hat\qE])\geq
f(\frac{1}{M}{\rm tr}(\hat\qE)\mathbf
1), \eee where equality holds when the
diagonal entries of $\hat\qE$ are all
equal to $\frac{1}{M}{\rm
tr}(\hat\qE)$. In the following, we
show that by applying a unitary
rotation to the source precoding
matrix, we can achieve this equality.
Using the eigenvalue decomposition
$\qQ_k^\dag\qK_k^{-1}\qQ_k=\qU_E^{(k)}\mathbf
\Lambda_E^{(k)}\qU_E^{(k)\dag}$, we can
write $ \mathbf
\Psi_k^{-1}=\qU_E^{(k)}(\qI_M+\mathbf
\Lambda_E^{(k)})^{-1}\qU_E^{(k)\dag}$.
Let us consider the feasible source
precoding matrix
$\bar\qP_k=\qP_k\qU_E^{(k)}\qV_0$,
where $\qV_0$ is a unitary matrix and
thus does not affect the power
constraints. Replacing $\qP_k$ with
$\bar\qP_k$ in $\mathbf{\Psi}_k^{-1}$,
we arrive at \bee \mathbf{\Psi}_k^{-1}=
\qV_0^\dag(\qI_M+\mathbf\Lambda_E^{(k)})^{-1}\qV_0,
\eee which allows us to express the
error CV matrix as \bee
\hat\qE=\frac{\sigma_s^2}{N_c}\mathbf
\sum_{k=1}^{N_c}\mathbf{\Psi}_k^{-1}=\frac{\sigma_s^2}{N_c}\qV_0^\dag\sum_{k=1}^{N_c}(\qI_M+\mathbf
\Lambda_E^{(k)})^{-1}\qV_0. \eee Since
$\sum_{k=1}^{N_c}(\qI_M+\mathbf
\Lambda_E^{(k)})^{-1}$ is the sum of
$N_c$ diagonal matrices, it is also a
diagonal matrix. Based on Lemma 7, we
conclude that there exists a unitary
matrix $\qV_0$ such that $\hat\qE$ has
identical diagonal elements given by
$\frac{1}{M}{\rm tr}(\hat\qE)$. Since
the objective function is an increasing
function w.r.t.~its arguments,
minimizing the original Schur-convex
objective function is now equivalent to
minimizing $\frac{1}{M}{\rm
tr}(\hat\qE)$, which is a Schur-concave
function. Therefore, the optimal
structures of $\qP_k$ and $\qA_k$ are
given by (\ref{PP}) and (\ref{AA}),
respectively. Furthermore, as the
resulting $\qU_E^{(k)}$ can be shown to
be an identity matrix, the source
precoding matrix for Schur-convex
functions
 is given by $\bar\qP_k=\qP_k\qV_0=\bar\qV_{G}^{(k)}\mathbf \Lambda_{P}^{(k)}\qV_0$.

{}

\end{document}